\newtheorem{thm}{Theorem}
\newtheorem{lem}[thm]{Lemma}
\newtheorem{cor}[thm]{Corollary}
\newtheorem{prop}[thm]{Proposition}
\newtheorem{definition}[thm]{Definition}
\newtheorem{remark}[thm]{Remark}
\newtheorem{example}[thm]{Example}
\newtheorem{algorithm}[thm]{Algorithm}
\newtheorem{problem}[thm]{Problem}
\newtheorem{notation}[thm]{Notation}
\def\QED{\hbox{\hskip 1pt \vrule width4pt height 6pt depth 1.5pt \hskip 1pt}}
\newenvironment{prf}[1]{\trivlist
\item[\hskip \labelsep{\bf #1.\hspace*{.3em}}]}{~\hspace{\fill}~$\square$\endtrivlist}
\newenvironment{proof}{
\begin{prf}{Proof}}{
\end{prf}}
 \def\square{\QED}
\def\bZ{{\mathbb Z}}
\def\ord{\mbox{ord }}
\def\GL{{\rm GL}}
\def\lclm{{\rm lclm}}
\def\gcrd{{\rm gcrd}}
\def\fraka{{\mathfrak a}}
\def\frakm{{\mathfrak m}}
\def\frakD{{\mathfrak D}}
\def\frakW{{\mathfrak W}}
\def\QED{\hbox{\hskip 1pt \vrule width4pt height 6pt depth 1.5pt \hskip 1pt}}
\def\calU{{\cal U}}
\def\calM{{\cal M}}
\def\ann{{\rm ann}}
\def\va{{\mathbf a}}
\def\vc{{\mathbf c}}
\def\vg{{\mathbf g}}
\def\vp{{\mathbf p}}
\def\vh{{\mathbf h}}
\def\vd{{\mathbf d}}
\def\ve{{\mathbf e}}
\def\vz{{\mathbf z}}
\def\vv{{\mathbf v}}
\def\pa{{\partial}}
\def\vpa{{\bm \partial}}
\def\ord{{\rm ord}}
\def\rd{{\rm d}}
\def\bfx{\mathbf x}
\def\bfc{\mathbf c}
\newcommand{\vx} {{\bf x}}
\newcommand{\Mat} {{\rm Mat}}
\begin{document}
\title{Telescopers for differential forms with one parameter\thanks{S.\ Chen was partially supported by the NSFC
grants 11871067, 11688101, the Fund of the Youth Innovation Promotion Association, CAS, and the National Key Research and Development Project 2020YFA0712300, R. Feng was partially supported by the NSFC
grants  11771433, 11688101, Beijing Natural Science Foundation under Grant Z190004, and the National Key Research and Development Project 2020YFA0712300.}}
\author{\smallskip Shaoshi Chen$^{1}$, \,  Ruyong Feng$^{1}$, \,  Ziming Li$^1$, \\  \bigskip
Michael F.\ Singer$^{2}$, \, Stephen Watt$^3$ \\
$^1$KLMM,\, AMSS, \,Chinese Academy of Sciences and\\ School of Mathematics, University of Chinese Academy of Sciences, \\Beijing, 100190, China\\
$^2$Department of Mathematics, North Carolina State University,\\ Raleigh, 27695-8205, USA\medskip \\
$^3$SCG, Faculty of Mathematics,  University of Waterloo, \\Ontario, N2L3G1, Canada\\
{\sf \{schen, ryfeng\}@amss.ac.cn, zmli@mmrc.iss.ac.cn}\\
{\sf  singer@math.ncsu.edu, smwatt@uwaterloo.ca}
}

\maketitle
\begin{abstract}
Telescopers for a function are linear differential (resp. difference) operators annihilated by the definite integral (resp. definite sum) of this function. They play a key role in Wilf-Zeilberger theory and algorithms for computing them have been extensively studied in the past thirty years. In this paper, we introduce the notion of telescopers for differential forms with $D$-finite function coefficients. These telescopers appear in several areas of mathematics, for instance parametrized differential Galois theory and mirror symmetry. We give a sufficient and necessary condition for the existence of telescopers for a differential form and describe a method to compute them if they exist. Algorithms for verifying this condition are also given.
\end{abstract}
\section{Introduction} \label{SECT:intro}
In the Wilf-Zeilberger theory, telescopers usually refer to the operators in the output of the method of creative telescoping, which are linear differential (resp. difference) operators annihilated by the definite integrals (resp. the definite sums) of the input functions. The telescopers have emerged at least from the work of Euler \cite{euler} and have been found many applications in the various areas of mathematics such as combinatorics, number theory, knot theory and so on (see Section 7 of \cite{koutschan} for details). In particular, telescopers for a function are often used to prove the identities involving this function or even obtain a simpler expression for the definite integral or sum of this function. As a clever and algorithmic process for constructing telescopers, creative telescoping firstly appeared as a term in the essay of van der Poorten on Apr\'ey's proof of the irrationality of $\zeta(3)$ \cite{vanderpoorten}. However, it was Zeilberger and his collaborators \cite{almkvist-zeilberger,petkovsek-wilf-zeilberger,wilf-zeilberger1,wilf-zeilberger2,zeilberger} in the early 1990s who equipped creative telescoping with a concrete meaning and formulated it as an algorithmic tool. Since then, algorithms for creative telescoping have been extensively studied. Based on the techniques used in the algorithms, the existing algorithms are divided into four generations, see \cite{chen-kauers} for the details. Most recent algorithms are called reduction-based algorithms which were first introduced by Bostan et al. in \cite{bostan-chen-chyzak} and further developed in \cite{bostan-chen-chyzak-li-xin,chen-kauers-koutschan,chen-vanhoeij-kauers-koutschan,bostan-chyzak-lairez-salvy} etc. The termination of these algorithms relies on the existence of telescopers. The question for which input functions the algorithms will terminate has been answered in \cite{wilf-zeilberger3,abramov2,abramov-le,chen-hou-mu,chen-chyzak-feng-fu-li} etc for several classes of functions such as rational functions and hypergeometric functions and so on. The algorithmic framework for creative telescoping is now called the Wilf-Zeilberger theory.

Most of algorithms for creative telescoping focus on the case of one bivariate function as input. There are only a few algorithms which deal with multivariate case (see \cite{chen-feng-li-singer,bostan-lairez-salvy,lairez,chen-hou-labahn-wang} etc). It is still a challenge to develop the multivariate analogue of the existing algorithms (see Section 5 of \cite{chen-kauers}). In the language of differential forms (with $m$ variables and one parameter), the results in \cite{chen-feng-li-singer} and \cite{lairez} dealt with the cases of differential 1-forms and differential $m$-forms respectively. On the other hand, in the applications to other domains such as mirror symmetry (see \cite{li-lian-yau,morrison-walcher,muller-weinzierl-zayadeh}), one needs to deal with the case of differential $p$-forms with $1\leq p \leq m$. Below is an example.
\begin{example}
\label{EX:calabi-yau}
Consider the following one-parameter family of the quintic polynomials
$$
    W(t)=\frac{1}{5}(x_1^5+x_2^5+x_3^5+x_4^5+x_5^5) -t x_1x_2x_3x_4x_5
$$
where $t$ is a parameter. Set
$$
   \omega=\sum_{i=1}^5 \frac{(-1)^{i-1} x_i}{W(t)} \rd x_1\wedge \cdots \wedge \widehat{\rd x_i} \wedge \cdots \wedge \rd x_5.
$$
To obtain the Picard-Fuchs equation for the mirror quintic, the geometriests want to compute a fourth order linear differential operator $L$ in $t$ and $\pa_t$ such that $L(\omega)=\rd \eta$ for some differential 3-form $\eta$. Here one has that
$$
   L=(1-t^5)\frac{\partial^4}{\partial t^4}-10t^4\frac{\partial^3}{\partial t^3}-25t^3\frac{\partial^2}{\partial t^2}-15t^2\frac{\partial}{\partial t}-1.
$$
Set $\theta_t=t\partial/\partial t$. Then
$$
   \tilde{L}=-\frac{1}{5^4}L\frac{1}{t}=\theta_t^4-5t(5\theta_t+1)(5\theta_t+2)(5\theta_t+3)(5\theta_t+4)
$$
and the equation $\tilde{L}(y)=0$ is the required Picard-Fuchs equation.
\end{example}
We call the operator $L$ appearing in the above example a telescoper for the differential form $\omega$ (see Definition~\ref{DEF:telescopers}). In this paper, we study the telescopers for differential forms with $D$-finite function coefficients. Instead of the geometric method used in \cite{li-lian-yau,morrison-walcher,muller-weinzierl-zayadeh}, we provide an algebraic treatment. We give a sufficient and necessary condition guaranteeing the existence of telescopers and describe a method to compute them if they exist. Meanwhile, we also present algorithms to verify this condition.

The rest of this paper is organized as follows. In Section 2, we recall differential forms with $D$-finite function coefficients and introduce the notion of telescopers for differential forms. In Section 3, we give a sufficient and necessary condition for the existence of telescopers, which can be considered as a parametrized version of Poincar\'{e} lemma on differential manifolds. In Section 4, we give two algorithms for verifying the condition presented in Section 3.

{\bf Notations}: The following notations will be frequently used thoughout this paper.
\begin{longtable}{rl}
$\pa_t$: &the usual derivation $\pa/\pa_t$ with respect to $t$,\\
$\pa_{x_i}$:& the usual derivation $\pa/\pa_{x_i}$with respect to $x_i$,\\
$\vx$:& $\{x_1,\cdots,x_n\}$\\
$\vpa_\vx$:& $\{\pa_{x_1},\cdots,\pa_{x_n}\}$,\\
\end{longtable}
The following formulas will also be frequently used:
\begin{align}
   \pa_x^{\mu} x^{\nu} &= \begin{cases} \nu(\nu-1)\cdots(\nu-\mu+1)x^{\nu-\mu} + *\pa_x, & \nu \geq \mu \\
     * \pa_x, & \nu<\mu  \end{cases}  \label{EQ:formula1} \\
      x^{\mu}\pa_x^{\nu}  &=\begin{cases}  (-1)^{\nu}\mu(\mu-1)\cdots(\mu-\nu+1)x^{\mu-\nu} +\pa_x *, & \mu\geq \nu\\
     \pa_x *,  &\mu<\nu \end{cases} \label{EQ:formula2}
\end{align}
where $*\in k\langle x,\pa_x\rangle$.

\section{$D$-finite elements and differential forms}
Throughout this paper, let $k$ be an algebraically closed field of characteristic zero and let $K$ be the differential field $k(t,x_1,\cdots,x_n)$ with the derivations $\pa_t, \pa_{x_1}, \cdots,\pa_{x_n}$. Let $\frakD=K\langle \pa_t, \vpa_\vx\rangle$ be the ring of linear differential operators with coefficients in $K$. For $S\subset \{t, \vx, \pa_t, \vpa_\vx\}$, denote by $k\langle S \rangle$ the subalgebra over $k$ of $\frakD$ generated by $S$. For brevity, we denote $k\langle t, \vx, \pa_t, \pa_\vx\rangle$  by $\frakW$. Let $\calU$ be the universal differential extension of $K$ in which every algebraic differential equation having a solution in an extension of $\calU$ has a solution (see page 133 of \cite{kolchin} for more precise description).
\begin{definition}
An element $f\in \calU$ is said to be $D$-finite over $K$ if for every $\delta\in \{\pa_t, \pa_{x_1}, \cdots, \pa_{x_n}\}$, there is a nonzero operator $L_{\delta}\in K\langle \delta \rangle$ such that $L_{\delta}(f)=0$.
\end{definition}

Denote by $R$ the ring of $D$-finite elements over $K$, and by $\calM$ a free $R$-module of rank $m$ with base $\{\fraka_1,\cdots,\fraka_m\}$. Define a map $\frakD\times \calM\rightarrow \calM$ given by
$$\left(L,\sum_{i=1}^mf_i\fraka_i\right)\rightarrow L\left(\sum_{i=1}^mf_i\fraka_i\right):=\sum_{i=1}^mL(f_i)\fraka_i.$$
This map endows $\calM$ with a left $\frakD$-module structure.
Let
\[
  \bigwedge(\calM)=\bigoplus_{i=0}^m \bigwedge \nolimits^i(\calM)
\]
 be the exterior algebra of $\calM$, where $\bigwedge^i(\calM)$ denotes the $i$-th homogeneous part of $\bigwedge(\calM)$ as a graded $R$-algebra. We call an element in $\bigwedge^i(\calM)$ an $i$-form. $\bigwedge(\calM)$ is also a left $\frakD$-module.
Let $\rd: R\rightarrow \calM$ be a map defined as
\[
   \rd f=\pa_{x_1}(f)\fraka_1+\cdots+\pa_{x_m}(f)\fraka_m
\]
for any $f\in R$. Then $\rd$ is a derivation over $k$. Note that for each $i=1,\cdots,m$, $\rd x_i=\fraka_i$. Hence in the rest of this paper we shall use $\{\rd x_1,\cdots,\rd x_m\}$ instead of $\{\fraka_1,\cdots,\fraka_m\}$. The map $\rd$ can be extended to a derivation on $\bigwedge(\calM)$ which is defined recursively as
$$
   \rd(\omega_1\wedge\omega_2)=\rd\omega_1\wedge\omega_2+(-1)^{i-1}\omega_1\wedge \rd\omega_2
$$
for any $\omega_1\in \bigwedge^i(\calM)$ and $\omega_2\in \bigwedge^j(\calM)$. For detailed definitions on exterior algebra and differential forms, we refer the readers to Chapter 19 of \cite{lang} and  Chapter 1 of \cite{weinstraub} respectively.
As the usual differential forms, we introduce the following definition.
\begin{definition} Let $\omega\in \bigwedge(\calM)$ be a form.
\begin{itemize}
\item [$(1)$]
$\omega$ is said to be closed if $\rd\omega=0$, and exact if there is  $\eta\in \bigwedge(\calM)$ such that $\omega=\rd\eta$.
\item [$(2)$]
$\omega$ is said to be $\pa_t$-closed ($\pa_t$-exact) if there is a nonzero $L\in k(t)\langle \pa_t\rangle$ such that $L(\omega)$ is closed (exact).
\end{itemize}
\end{definition}
\begin{definition}
\label{DEF:telescopers}
Assume that $\omega\in \bigwedge(\calM)$. A nonzero $L\in k(t)\langle \pa_t\rangle$ is called a telescoper for $\omega$ if $L(\omega)$ is exact.
\end{definition}
\section{Parametrized Poincar\'{e} lemma}
The famous Poincar\'e lemma states that if $B$ is an open ball in $\mathbb{R}^n$, any smooth closed $i$-form $\omega$ defined on $B$ is exact, for any integer $i$ with $1\leq i \leq n$. In this section, we shall prove the following lemma which can be viewed as a parametrized analogue of Poincar\'e lemma for $\bigwedge(\calM)$.

\begin{lem}[Parameterized Poincar\'{e} lemma]
\label{LM:ppl}
Let~$\omega  \in \bigwedge^p(\calM)$. If $\omega$ is $\pa_t$-closed then it is $\pa_t$-exact.
\end{lem}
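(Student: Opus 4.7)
The plan is to reduce to the case that $\omega$ is itself closed, then induct on the number $n$ of $x$-variables. If $L_0\in k(t)\langle \pa_t\rangle$ witnesses $\pa_t$-closedness of $\omega$, then $L_0(\omega)$ is closed; if a telescoper $L_1$ for the closed form $L_0(\omega)$ can be produced, the composite $L_1L_0$ is a telescoper for $\omega$. It therefore suffices to prove that every closed $p$-form is $\pa_t$-exact.

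The base case $n=1$ splits by degree. For $p=0$, closedness gives $\pa_{x_1}\omega=0$, so $\omega$ and every $\pa_t^j\omega$ lies in $\calU^{\pa_{x_1}}$; extracting coefficients of monomials in $x_1$ from a nonzero annihilator of $\omega$ in $K\langle \pa_t\rangle$ (which exists by $D$-finiteness) produces a nonzero $L\in k(t)\langle \pa_t\rangle$ killing $\omega$, whence $L(\omega)=0=\rd 0$ is exact. For $p=1$, the statement reduces to the classical existence of a telescoper for a $D$-finite function of $(t,x_1)$.

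For the inductive step, decompose $\omega=\omega_1+\rd x_n\wedge\omega_2$ with $\omega_1,\omega_2$ not involving $\rd x_n$. Closedness of $\omega$ is equivalent to $d_{\hat{n}}\omega_1=0$ and $\pa_{x_n}\omega_1=d_{\hat{n}}\omega_2$, where $d_{\hat{n}}$ denotes exterior derivative in $x_1,\ldots,x_{n-1}$. The strategy is to find $L\in k(t)\langle \pa_t\rangle$ and $\eta\in\bigwedge(\calM)$ so that $L\omega-\rd\eta$ has no $\rd x_n$-component and no $x_n$-dependence in its coefficients; the resulting form then lies in the exterior subalgebra attached to $x_1,\ldots,x_{n-1}$ and is still closed, so the induction hypothesis supplies the remaining telescoper. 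Concretely, one uses multivariate creative telescoping on the coefficients of $\omega_2$ to produce a $(p-1)$-form $\eta$ and an operator $L$ making the $\rd x_n$-part of $L\omega-\rd\eta$ vanish, with compatibility across coefficients ensured by $\pa_{x_n}\omega_1=d_{\hat{n}}\omega_2$; this is followed by a second reduction that strips residual $x_n$-dependence from the remaining coefficients via iterated telescoping.

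The main obstacle is coordinating everything with a single operator $L\in k(t)\langle \pa_t\rangle$, since the individual telescoping subtasks on distinct coefficients naturally yield different operators. This is resolved by taking iterated left least common multiples and invoking the closure properties of the ring $R$ of $D$-finite elements under $\vpa_\vx$ and $\pa_t$, keeping all auxiliary forms in $\bigwedge(\calM)$. A secondary difficulty is ensuring that the local corrections from the multivariate telescoping step assemble into a coherent global form $\eta\in\bigwedge(\calM)$; here the closedness identity $\pa_{x_n}\omega_1=d_{\hat{n}}\omega_2$ plays the crucial compatibility role, and the existence of multivariate telescopers for $D$-finite functions makes each subtask algorithmically achievable with a $k(t)\langle\pa_t\rangle$-operator.
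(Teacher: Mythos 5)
Your opening reduction (compose a telescoper for the closed form $L_0(\omega)$ with $L_0$), your decomposition $\omega=\omega_1+\rd x_n\wedge\omega_2$, and the closedness identities $\rd_{n-1}\omega_1=0$, $\pa_{x_n}\omega_1=\rd_{n-1}\omega_2$ all match the paper's setup. The gap is in the inductive step, and it is not a detail: the two devices you invoke, coefficient-wise ``multivariate creative telescoping'' and iterated $\lclm$'s, cannot deliver what the step needs. First, $\lclm$'s coordinate the \emph{operators} but not the \emph{certificates}: from relations $L_i(f_i)=\pa_{x_{j_i}}(g_i)$ for the various coefficients you can pass to a common left multiple $L$, but exactness of $L(\omega)$ requires a single form $\eta$ whose components satisfy all the coupled equations simultaneously, and nothing in your argument constructs it; the closedness identity must be used constructively to build $\eta$, not merely cited as ``ensuring compatibility.'' Second, and more seriously, the subtask you assign to the first reduction --- find $L\in k(t)\langle \pa_t\rangle$ making the $\rd x_n$-part of $L\omega-\rd\eta$ vanish --- is not solvable coefficient by coefficient. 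Already for $n=2$, $p=1$, $\omega=f_1\,\rd x_1+f_2\,\rd x_2$, it asks for $L\in k(t)\langle\pa_t\rangle$ and $g$ with $L(f_2)=\pa_{x_2}(g)$; for a single $D$-finite $f_2$ such an $L$, free of all of $x_1,\pa_{x_1},x_2,\pa_{x_2}$, in general does not exist (Lipshitz's lemma only guarantees a common annihilator in a subalgebra generated by $n+2$ of the $2n+2$ symbols, and $\{t,\pa_t\}$ supplies only two). Its existence here depends on the global closedness of $\omega$, which is precisely what your induction has not yet exploited at that point --- so the step is essentially assuming a statement of the same strength as the one being proved.

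The paper escapes this circularity by proving a stronger statement (Theorem~\ref{THM:ppl}) by induction on the number $s$ of variables occurring in the truncated differential $\rd_s$, allowing the telescoper to lie in $k\langle\{t,\pa_t\}\cup V\rangle$ where $V$ contains exactly one of $x_i,\pa_{x_i}$ for each inactive index $i>s$. That extra freedom is what makes the Lipshitz-type counting ($|S|=n+2$ in Lemma~\ref{LM:basecase}) work at the base case. The inductive step then applies the hypothesis twice --- once to the $\rd x_\ell$-free part with $x_\ell$ adjoined to $V$, and once to the residual coefficient of $\rd x_\ell$ with $\pa_{x_\ell}$ adjoined --- and afterwards discards the unwanted symbol by left-multiplying by $\pa_{x_\ell}^{d}$ or $x_\ell^{e_1}$ and using formulas (\ref{EQ:formula1})--(\ref{EQ:formula2}), absorbing the residue into a term of the form $\pa_{x_\ell}(\cdot)\wedge\rd x_\ell$, i.e., into an $\rd_\ell$-exact correction. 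If you wish to keep an induction on $n$, you must strengthen your induction hypothesis in an analogous way; as stated, it is too weak to carry the step.
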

To the above lemma, we need some lemmas.
\begin{lem}[Lipshitz's lemma (Lemma 3 of \cite{lipshitz})]
Assume that $f$ is a $D$-finite element over $k(\vx)$. For each pair $1\leq i <j \leq n$, there is a nonzero operator $L\in k(x_1,x_3,\cdots, x_n)\langle \pa_{x_i}, \pa_{x_j}\rangle$ such that $L(f)=0$.
\end{lem}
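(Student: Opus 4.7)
The plan is to reduce $\omega$ step-by-step to a $0$-form by successively peeling off its highest-indexed differential $dx_j$, at the cost of applying an operator in $k(t)\langle \pa_t\rangle$ and subtracting an exact form. First I would reduce to the case $d\omega=0$: by hypothesis $L_0(\omega)$ is closed for some nonzero $L_0 \in k(t)\langle\pa_t\rangle$, and if every closed $\sigma$ admits $L_1$ and $\eta$ with $L_1\sigma = d\eta$, then $L_1L_0$ witnesses the $\pa_t$-exactness of $\omega$. So assume $d\omega=0$ and induct on $j(\omega)$, the largest index $i$ with $dx_i$ appearing in some monomial of $\omega$ (with $j(\omega)=0$ iff $\omega$ is a $0$-form).

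In the inductive step, with $j := j(\omega) \ge 1$, write $\omega = \alpha + \beta \wedge dx_j$ where the $p$-form $\alpha$ and the $(p-1)$-form $\beta$ do not involve $dx_j$. The equation $d\omega = 0$ yields the compatibility $\pa_{x_j}(\alpha) = (-1)^{p+1}\, d'\beta$, where $d' = \sum_{i \neq j} dx_i \wedge \pa_{x_i}$. I aim to construct $L_1 \in k(t)\langle\pa_t\rangle$ and a $(p-1)$-form $\gamma$, also free of $dx_j$, with $\pa_{x_j}(\gamma) = (-1)^{p-1}\, L_1(\beta)$ coefficient-by-coefficient; then $L_1\omega - d\gamma$ is closed with $j(\cdot) < j$, and the inductive hypothesis produces $L_2, \eta'$ with $L_2 L_1 \omega = d(\eta' + L_2\gamma)$. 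Building $(L_1,\gamma)$ is a one-variable creative-telescoping problem applied to each coefficient $f$ of $\beta$: Lipshitz's lemma supplies a nonzero operator in the pair $(\pa_t,\pa_{x_j})$ annihilating $f$, and splitting it as $L_f - \pa_{x_j} Q_f$, with $L_f$ involving only $\pa_t$, yields $L_f(f) = \pa_{x_j}(Q_f f)$; further applications of Lipshitz's lemma eliminate the remaining $x$-variables from the coefficients of $L_f$, and a common left multiple across the finitely many coefficients of $\beta$ provides a uniform $L_1$ and assembles the corresponding certificates into $\gamma$.

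The base case $j(\omega)=0$ asks: given $f \in R$ with $\pa_{x_i}(f)=0$ for $i=1,\ldots,m$, find $L \in k(t)\langle\pa_t\rangle$ with $L(f)=0$. Take a $k(t,\vx)$-basis $f, \pa_t f, \ldots, \pa_t^{d-1} f$ of the $k(t,\vx)\langle\pa_t\rangle$-module generated by $f$ and write $\pa_t^d f = \sum a_i(t,\vx)\, \pa_t^{i-1} f$; applying $\pa_{x_l}$ and using $\pa_{x_l}(\pa_t^i f)=0$ together with the linear independence forces each $a_i$ to be $x_l$-constant for every $l \leq m$, and an analogous elimination via Lipshitz's lemma disposes of the remaining variables, driving the $a_i$ into $k(t)$ and yielding the required $L$. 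The main obstacle is ensuring, in the inductive step, that the extracted telescoper $L_f$ can be fully cleared of all $x$-variables; this elimination is precisely the content of Lipshitz's lemma applied iteratively, which is why that lemma is placed immediately before the statement.
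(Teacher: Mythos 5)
You have proved the wrong statement. Your argument is an outline of the parametrized Poincar\'e lemma (reducing a $\pa_t$-closed form to a $\pa_t$-exact one by peeling off the top differential $\rd x_j$), and inside that outline you \emph{invoke} Lipshitz's lemma as a known tool: ``Lipshitz's lemma supplies a nonzero operator in the pair $(\pa_t,\pa_{x_j})$ annihilating $f$.'' But Lipshitz's lemma is precisely the statement you were asked to prove, so your proposal assumes its conclusion and establishes nothing about it. None of the machinery you set up --- the decomposition $\omega=\alpha+\beta\wedge \rd x_j$, the splitting $L_f-\pa_{x_j}Q_f$, the base case with $\pa_{x_i}(f)=0$ --- bears on the actual claim, namely that a $D$-finite $f$ admits a nonzero annihilator involving only the two derivations $\pa_{x_i},\pa_{x_j}$ and with coefficients in the subfield of $k(\vx)$ omitting one of the variables. (The paper itself does not reprove this either; it quotes Lemma 3 of Lipshitz's paper, so there is no internal proof to compare against --- but a blind attempt still has to supply the argument.)

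The missing argument is a dimension count, and it is short. Since $f$ is $D$-finite over $k(\vx)$, the $k(\vx)$-vector space $W$ spanned by all derivatives $\pa_{x_1}^{\alpha_1}\cdots\pa_{x_n}^{\alpha_n}(f)$ is finite dimensional, say with basis $e_1,\dots,e_m$, and one can choose a common denominator $q\in k[\vx]$ so that each $\pa_{x_i}^a\pa_{x_j}^b(f)$ with $a,b\le N$, after multiplication by a suitable power of $q$, has coordinates in the basis $e_1,\dots,e_m$ that are polynomials in $x_j$ of degree $O(N)$ with coefficients in $k(x_1,\dots,\widehat{x_j},\dots,x_n)$. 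Hence the $(N+1)^2$ elements $\pa_{x_i}^a\pa_{x_j}^b(f)$, $0\le a,b\le N$, all lie in a vector space of dimension $O(N)$ over the field omitting $x_j$; for $N$ large enough they must be linearly dependent over that field, and any nontrivial relation yields the desired operator $L$. Nothing resembling this quadratic-versus-linear count appears in your proposal, so the gap is not a reparable detail but the entire content of the lemma.
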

The following lemma is a generalization of Lipshitz's lemma.
\begin{lem}
\label{LM:modifiedlipshitz}
Assume that $f_1,\cdots, f_m$  are $D$-finite elements over $k(\vx,t)$ and
$$S\subset \{t,x_1,\cdots,x_n,\pa_t,\pa_{x_1},\cdots,\pa_{x_n}\}$$
with $|S|>n+1$. Then one can compute a nonzero operator $T$ in $k\langle S\rangle$ such that $T(f_i)=0$ for all $i=1,\cdots,m$.
\end{lem}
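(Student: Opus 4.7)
The plan is to generalize the dimension-counting argument underlying Lipshitz's lemma. Since each $f_i$ is $D$-finite over $K = k(t,\vx)$, the $K$-vector space $V^{(i)} := K\langle \pa_t,\vpa_\vx\rangle \cdot f_i \subseteq \calU$ is finite-dimensional of some dimension $r_i$, with a $K$-basis $\{e_{i,1},\ldots,e_{i,r_i}\}$ built from derivatives of $f_i$. After clearing a common denominator, one obtains a nonzero polynomial $p_i \in k[t,\vx]$ and an integer $c_i$ such that for every $\delta \in \{\pa_t,\pa_{x_1},\ldots,\pa_{x_n}\}$,
\[
\delta(e_{i,j}) \;=\; \frac{1}{p_i}\sum_{l=1}^{r_i} a^{(\delta)}_{j,l}\,e_{i,l}, \qquad a^{(\delta)}_{j,l} \in k[t,\vx]_{\leq c_i}.
\]

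Next, by induction on the total degree (using $\delta(p_i^{N-1}u)=(N-1)p_i^{N-2}\delta(p_i)u+p_i^{N-1}\delta(u)$ to commute derivations past powers of $p_i$, and noting that multiplication by $t$ or $x_l$ only raises polynomial degree by one), one shows that for every $D \in \frakW$ of total degree $\leq N$,
\[
p_i^N\,D\,f_i \;\in\; W_N^{(i)} := \sum_{j=1}^{r_i} k[t,\vx]_{\leq C_i N}\cdot e_{i,j},
\]
where $C_i$ depends only on $c_i$ and $\deg p_i$. Because the $e_{i,j}$ are $K$-linearly independent (hence $k[t,\vx]$-linearly independent inside $\calU$), one gets $\dim_k W_N^{(i)} \leq r_i\binom{n+1+C_iN}{n+1} = O(N^{n+1})$.

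On the source side, by the PBW property, $k\langle S\rangle$ admits a $k$-basis of normal-form monomials in the $|S|$ generators of $S$, so $\dim_k k\langle S\rangle_{\leq N} = \binom{|S|+N}{|S|} = \Theta(N^{|S|})$. Consider the $k$-linear evaluation map
\[
\psi_N\,:\, k\langle S\rangle_{\leq N} \;\longrightarrow\; \bigoplus_{i=1}^m W_N^{(i)}, \qquad D \;\mapsto\; \bigl(p_1^N\,Df_1,\,\ldots,\,p_m^N\,Df_m\bigr).
\]
Since $|S|>n+1$, the source dimension $\Theta(N^{|S|})$ eventually dominates the target dimension $O(N^{n+1})$, so $\ker\psi_N \neq 0$ for all sufficiently large $N$. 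Any nonzero $T \in \ker\psi_N$ satisfies $p_i^N\,Tf_i = 0$ in $\calU$, and since $p_i \neq 0$ we conclude $Tf_i = 0$ for every $i$. Algorithmically, one iterates $N = 0,1,2,\ldots$, expands each $Df_i$ against the explicit $k$-basis $\{t^a x^b e_{i,j}\}$ of $W_N^{(i)}$ to obtain a finite linear system over $k$, and returns the first nonzero kernel vector; termination is guaranteed by the dimension comparison.

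The main obstacle is the induction in the second step: verifying carefully that $p_i^N$ is a sufficient denominator and that the polynomial coefficients grow only linearly in $N$, so that the bound on $\dim_k W_N^{(i)}$ has exponent exactly $n+1$. This requires the interplay between differentiating a power $p_i^{N-1}$ (which produces a factor of $N-1$ but does not raise the polynomial degree) and the fact that multiplication by elements of $\{t,x_1,\ldots,x_n\}$ raises degree by one without affecting the denominator power. Once the linear growth is pinned down, the comparison $\Theta(N^{|S|}) > O(N^{n+1})$ is immediate.
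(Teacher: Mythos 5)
Your proof is correct and follows essentially the same route as the paper: the paper reduces to a jointly $D$-finite system by taking least common left multiples of the individual annihilators and then invokes the dimension-counting argument from Lipshitz's lemma, which is exactly the count you carry out explicitly (comparing the $\Theta(N^{|S|})$-dimensional space of operators against an $O(N^{n+1})$-dimensional value space, using $|S|>n+1$). The only cosmetic difference is that you absorb the multiplicity $m$ into a direct-sum target for the evaluation map rather than into a single common annihilating operator.
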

\begin{proof}
For each $\delta\in \{t,\pa_{x_1},\cdots,\pa_{x_n}\}$ and $i=1,\cdots,m$, let $T_i$ be a nonzero operator in $K\langle \delta \rangle$ such that $T_i(f_i)=0$. Set $T$ to be the least common left multiple of $T_1,\dots,T_m$. Then $T(f_i)=0$ for all $i=1,\cdots,m$. The lemma then follows from an argument similar to that in the proof of Lipshitz's lemma.
\end{proof}
\begin{lem}
\label{LM:basecase}
Assume that $f_1,\cdots,f_m$ are $D$-finite over $k(\vx,t)$, $I,J\subset \{1,\cdots,n\}$ and $I\cap J=\emptyset$. Assume further that $V\subset \{x_i,\pa_{x_i} | i\in \{1,\cdots,n\}\setminus (I\cup J)\}$ with $|V|=n-|I|-|J|$. Then one can compute an operator $P$ of the form
\[
    L+\sum_{i\in I} \pa_{x_i} M_i +\sum_{j\in J} N_j  \pa_{x_j}
\]
such that $P(f_l)=0$ for all $l=1,\cdots,m$, where $L$ is a nonzero operator in $k\langle \{t, \pa_t\}\cup V\}\rangle$, $M_i,N_j\in \frakW$
   and $N_j$ is free of $x_i$ for all $i\in I$ and $j\in J$.
\end{lem}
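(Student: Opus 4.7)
I plan to apply Lemma~\ref{LM:modifiedlipshitz} to a carefully chosen generating set $S$ for which the only non-commuting pair is $(t,\pa_t)$, and then reshape the resulting annihilator $T$ into the prescribed form by purely formal commutation of the generators of $S$.

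\textbf{Choice of $S$ and a raw annihilator.} I would set
\[
S = \{t,\pa_t\} \cup V \cup \{\pa_{x_i} : i\in I\} \cup \{\pa_{x_j} : j\in J\}.
\]
Then $|S|=2+(n-|I|-|J|)+|I|+|J|=n+2>n+1$, so Lemma~\ref{LM:modifiedlipshitz} produces a nonzero $T\in k\langle S\rangle$ annihilating every $f_l$. The point of this choice is that, apart from $(t,\pa_t)$, every two generators of $S$ commute: each $\pa_{x_i}$ with $i\in I$ and each $\pa_{x_j}$ with $j\in J$ involves a variable absent from $S$, and the cardinality constraint $|V|=n-|I|-|J|$ forces $V$ to contain only one of $\{x_k,\pa_{x_k}\}$ for each $k\notin I\cup J$. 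In particular $S$ contains no $x_i$ with $i\in I\cup J$, and every $\pa_{x_\bullet}$ for $\bullet\in I\cup J$ commutes with every other element of $S$.

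\textbf{Reshaping $T$.} Using this commutativity, every monomial of $T$ can be put into the ordered form $m_{0,\alpha}\prod_{i\in I}\pa_{x_i}^{a_{i,\alpha}}\prod_{j\in J}\pa_{x_j}^{c_{j,\alpha}}$ with $m_{0,\alpha}\in k\langle\{t,\pa_t\}\cup V\rangle$. I would let $L$ collect the monomials with $a_{i,\alpha}=c_{j,\alpha}=0$ for all $i,j$; by construction $L\in k\langle\{t,\pa_t\}\cup V\rangle$. For each remaining monomial, if some $a_{i_0,\alpha}\ge 1$ with $i_0\in I$, factor one $\pa_{x_{i_0}}$ out to the left and assign the monomial to $\pa_{x_{i_0}}M_{i_0}$; otherwise some $c_{j_0,\alpha}\ge 1$ with $j_0\in J$, and factor $\pa_{x_{j_0}}$ out to the right and assign it to $N_{j_0}\pa_{x_{j_0}}$. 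These factorisations are legitimate by the commutativity just observed, and the resulting $M_i, N_j$ lie in $k\langle S\rangle\subset\frakW$; because $S$ contains no $x_i$ with $i\in I$, every $N_j$ is automatically free of those variables, matching the last clause of the statement. Setting $P:=T=L+\sum_{i\in I}\pa_{x_i}M_i+\sum_{j\in J}N_j\pa_{x_j}$ then satisfies $P(f_l)=0$ for every $l$.

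\textbf{Ensuring $L\neq 0$ --- the main obstacle.} The truly delicate step is to guarantee $L\neq 0$, which I plan to handle by a Lipshitz-style dimension count carried out inside $k\langle S\rangle$ itself. Letting $R = k\langle\{t,\pa_t\}\cup V\rangle$ and filtering by total degree, $\dim_k R_{\le N}$ grows like $N^{n+2-|I|-|J|}$, whereas the image of $R_{\le N}\cdot\vec f$ in the exact-differential quotient $k\langle S\rangle\vec f\big/\sum_{\bullet\in I\cup J}\pa_{x_\bullet}\bigl(k\langle S\rangle\vec f\bigr)$ is bounded by a polynomial in $N$ of degree $n+1-|I|-|J|$: the holonomicity of the joint module $\bigoplus_l\frakW f_l$ forces polynomial-coefficient representatives of degree $O(N)$, while modding out the $|I|+|J|$ partial derivatives $\pa_{x_\bullet}$ collapses that many polynomial directions $x_\bullet$. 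Since the source grows polynomially faster than the image, a pigeonhole comparison furnishes a nonzero $r_0\in R$ with $r_0\vec f$ exact \emph{within} $k\langle S\rangle\vec f$; the witnesses $U_\bullet$ then automatically live in $k\langle S\rangle$ and are therefore free of every $x_\bullet$ with $\bullet\in I\cup J$, so the reshaping of the previous paragraph applies without any commutator correction and yields $P$ with $L=r_0\neq 0$. The principal difficulty I anticipate is making the dimension bound on the exact-differential quotient fully quantitative---in particular controlling the image inside $k\langle S\rangle\vec f$ rather than the larger $\frakW\vec f$, which forces a careful tracking of polynomial-degree filtrations through the action of $\pa_{x_\bullet}$.
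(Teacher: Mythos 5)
Your choice of $S$ and the subsequent reshaping are fine as far as they go, but the proof has a genuine gap exactly where you flag it: nothing guarantees $L\neq 0$. With $S=\{t,\pa_t\}\cup V\cup\{\pa_{x_\bullet}:\bullet\in I\cup J\}$, the operator $T$ returned by Lemma~\ref{LM:modifiedlipshitz} may well have every monomial divisible by some $\pa_{x_\bullet}$, and your proposed repair is not a proof but an unestablished claim: that the quotient of $k\langle S\rangle_{\leq N}\vec f$ by $\sum_{\bullet\in I\cup J}\pa_{x_\bullet}\bigl(k\langle S\rangle\vec f\bigr)$ grows like $N^{\,n+1-|I|-|J|}$. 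This is a de Rham--type finiteness statement for a module over a \emph{proper subalgebra} of the Weyl algebra (no $x_\bullet$ for $\bullet\in I\cup J$ is available), which is substantially harder than the lemma itself; the standard holonomicity bound only gives $N^{n+1}$ for the module before quotienting, which is useless here since $n+1\geq n+2-|I|-|J|$ as soon as $I\cup J\neq\emptyset$, and the usual arguments for why each $\pa_{x_\bullet}$ collapses a polynomial direction use multiplication by $x_\bullet$, which your witnesses are forbidden to contain. So the entire burden of the lemma is deferred to a step you do not carry out.

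The paper avoids this difficulty by choosing $S$ \emph{asymmetrically}: it puts $\pa_{x_i}$ into $S$ for $i\in I$ but the \emph{variable} $x_j$ (not $\pa_{x_j}$) into $S$ for $j\in J$, still giving $|S|=n+2$. One then writes $T=\sum_{\vd}\pa_{x_1}^{d_1}\cdots\pa_{x_r}^{d_r}T_{\vd}$, takes the lex-\emph{minimal} exponent $\bar\vd$, and left-multiplies by $\prod_i x_i^{\bar d_i}$; formula (\ref{EQ:formula2}) extracts $\alpha T_{\bar\vd}$ plus terms left-divisible by $\pa_{x_i}$ (the $M_i$ part). Then one writes $T_{\bar\vd}=\sum_{\ve}L_{\ve}x_{r+1}^{e_1}\cdots x_{r+s}^{e_s}$, takes the lex-\emph{maximal} exponent, and left-multiplies by $\prod\pa_{x_{r+j}}^{\bar e_j}$; formula (\ref{EQ:formula1}) extracts $\beta L_{\bar\ve}\neq 0$ plus terms right-divisible by $\pa_{x_j}$ (the $N_j$ part, automatically free of $x_i$ for $i\in I$). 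The asymmetry of the choice is precisely what makes the $I$-corrections come out with $\pa_{x_i}$ on the left and the $J$-corrections with $\pa_{x_j}$ on the right, and the nonvanishing of $L$ is obtained constructively from the nonvanishing of an extremal coefficient rather than from a dimension count. If you try to symmetrize as you did and then apply the same extremal-term extraction, the $J$-corrections come out as $\pa_{x_j}M_j$ with $M_j$ involving $x_j$, and they cannot be commuted into the required form $N_j\pa_{x_j}$ --- which is exactly why the asymmetric choice of $S$ is the essential idea you are missing.
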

\begin{proof}
Without loss of generality, we assume that $I=\{1,\cdots,r\}$ and $J=\{r+1,\cdots,r+s\}$ where $r=|I|$ and $s=|J|$.
Let $$S=\{t,\pa_t\}\cup \{\pa_{x_i} | i\in I\}\cup \{x_j | j=r+1,\cdots,r+s\}\cup V.$$
Then $|S|=n+2>n+1$. By Lemma~\ref{LM:modifiedlipshitz}, one can compute a $T\in k\langle S\rangle \setminus \{0\}$ such that $T(f_l)=0$ for all $l=1,\cdots,m$. Write
$$
   T=\sum_{\vd=(d_1,\cdots,d_r)\in \Gamma_1}  \pa_{x_1}^{d_1}\cdots \pa_{x_r}^{d_r} T_{\vd}
$$
where $T_\vd \in k\langle \{t,\pa_t, x_{r+1},\cdots,x_{r+s}\}\cup V\}\rangle \setminus\{0\}$ and $\Gamma_1$ is a finite subset of $\bZ^r$. Let $\bar{\vd}=(\bar{d}_1,\cdots,\bar{d}_r)$ be the minimal element of $\Gamma_1$ with respect to the lex order on $\bZ^r$. Multiplying $T$ by $\prod_{i=1}^r x_i^{\bar{d}_i}$ on the left and using the formula (\ref{EQ:formula2}) yield that
\begin{equation}
\label{EQ:part1}
   \left(\prod_{i=1}^r x_i^{\bar{d}_i}\right)T=\alpha T_{\bar{\vd}} +\sum_{i=1}^{r}  \pa_{x_i} \tilde{T}_i
\end{equation}
where $\alpha$ is a nonzero integer and $\tilde{T}_i\in k\langle S\cup\{x_i |i\in I\} \rangle$. Write
$$
   T_{\bar{\vd}}=\sum_{\ve=(e_1,\cdots,e_s)\in \Gamma_2} L_{\ve} x_{r+1}^{e_1}\cdots x_{r+s}^{e_s}
$$
where $L_{\ve}\in k\langle \{t,\pa_t\}\cup V \rangle\setminus\{0\}$ and $\Gamma_2$ is a finite subset of $\bZ^s$. Let $\bar{\ve}=(\bar{e}_1,\cdots,\bar{e}_s)$ be the maximal element of $\Gamma_2$ with respect to the lex order on $\bZ^s$. Multiplying $T_{\bar{\vd}}$ by $\prod_{i=1}^s \pa_{x_{r+i}}^{\bar{e}_i}$ on the left and using the formula (\ref{EQ:formula1}) yield that
\begin{equation}
\label{EQ:part2}
    \left(\prod_{i=1}^s \pa_{x_{r+i}}^{\bar{e}_i}\right)T_{\bar{\vd}}=\beta L_{\bar{\ve}}+\sum_{j\in J} \tilde{L}_j \pa_{x_j}
\end{equation}
where $\tilde{L}_i\in k\langle \{t,\pa_t,x_{r+1},\cdots,x_{r+s},\pa_{x_{r+1}},\cdots,\pa_{x_{r+s}}\}\cup V\rangle$ and $\alpha$ is a nonzero integer. Combining (\ref{EQ:part1}) with (\ref{EQ:part2}) yields the required operator $P$.
\end{proof}
\begin{cor}
\label{COR:compatible}
Assume that $f_1,\cdots,f_m$ are $D$-finite over $k(\vx,t)$, $J$ is a subset of $\{1,\cdots,n\}$ and $V\subset \{x_i,\pa_{x_i} | i\in \{1,\cdots,n\}\setminus J\}$ with $|V|=n-|J|$. Assume further that $\pa_{x_j}(f_l)=0$ for all $j\in J$ and $l=1,\cdots,m$. Then one can compute a nonzero $L\in k\langle \{t,\pa_t\}\cup V \rangle$ such that $L(f_l)=0$ for all $l=1,\cdots,m$.
\end{cor}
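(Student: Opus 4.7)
The plan is to reduce this directly to Lemma~\ref{LM:basecase} by taking $I = \emptyset$. With that choice, the hypotheses on $V$ and $J$ in Lemma~\ref{LM:basecase} coincide exactly with the hypotheses stated in the corollary: the requirement $V \subset \{x_i, \pa_{x_i} \mid i \in \{1,\ldots,n\}\setminus J\}$ with $|V| = n - |J|$ is precisely the $I = \emptyset$ instance of $|V| = n - |I| - |J|$, and the empty--intersection condition $I \cap J = \emptyset$ holds trivially.

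Applying Lemma~\ref{LM:basecase} then produces an operator of the form
$$P = L + \sum_{j \in J} N_j \pa_{x_j},$$
where $L \in k\langle \{t, \pa_t\} \cup V \rangle$ is nonzero, each $N_j \in \frakW$, and $P(f_l) = 0$ for every $l = 1,\ldots,m$. (The side condition in Lemma~\ref{LM:basecase} that $N_j$ be free of the variables $x_i$ with $i \in I$ is vacuous here.) The next step is to exploit the extra compatibility hypothesis $\pa_{x_j}(f_l) = 0$: since each $N_j$ is a linear differential operator, it sends $0$ to $0$, so for every $l$ we obtain
$$0 = P(f_l) = L(f_l) + \sum_{j \in J} N_j\bigl(\pa_{x_j}(f_l)\bigr) = L(f_l),$$
which gives the required annihilator.

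There is essentially no obstacle. The corollary is just the specialization of Lemma~\ref{LM:basecase} in which the ``$I$-part'' of the constructed operator is forced to be irrelevant by the vanishing assumption on $\pa_{x_j}(f_l)$, so only the ``$J$-part'' plays a role and those terms collapse. The algorithmic content (``one can compute $L$'') is inherited directly from the corresponding statement in Lemma~\ref{LM:basecase}, since $L$ is literally a piece of the operator $P$ that the lemma constructs.
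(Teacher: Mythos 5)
Your proof is correct and matches the paper's: the paper's entire proof is ``In Lemma~\ref{LM:basecase}, set $I=\emptyset$,'' and you have simply spelled out the implicit final step that the terms $N_j\pa_{x_j}(f_l)=N_j(0)=0$ vanish by the hypothesis $\pa_{x_j}(f_l)=0$, leaving $L(f_l)=0$.
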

\begin{proof}
In Lemma~\ref{LM:basecase}, set $I=\emptyset$.
\end{proof}
The main result of this section is the following theorem which can be viewed as a generalization of Corollary~\ref{COR:compatible} to differential forms. To describe and prove this theorem, let us recall some notation from the first chapter of \cite{weinstraub}. For any $f\in R$, we define $\rd_0(f)=0$ and
\[
   \rd_s(f) = \partial_{x_1}(f)\rd x_1 + \cdots +  \partial_{x_s}(f)\rd x_s
\]
for~$s\in \{1, 2, \ldots, n\}$.
We can extend~$\rd_s$ to the module $\bigwedge(\calM)$ in a natural way. Precisely, let $\omega=\sum_{i=1}^m f_i \frakm_i$ where $\frakm_i$ is a monomial in $\rd x_1,\cdots, \rd x_n$. Then $\rd_0(\omega)=0$ and
$$
   \rd_s(\omega)=\sum_{i=1}^m \sum_{j=1}^s \pa_{x_j}(f_i)\rd x_j\wedge \frakm_i=\sum_{j=1}^s \rd x_j \wedge \pa_{x_j}(\omega).
$$
By definition, one sees that
$$\rd_s(u\wedge \rd x_s)=\rd_{s-1}(u)\wedge \rd x_s\,\,\mbox{and}\,\,\rd_s(u)=\rd_{s-1}(u)+\rd x_s\wedge \pa_{x_s}(u).$$
\begin{thm}\label{THM:ppl}
Assume that~$0\leq s \leq n, V\subset \{x_{s+1},\cdots,x_n, \pa_{x_{x+1}},\cdots,\pa_{x_n}\}$ with $|V|=n-s$ and $\omega \in \bigwedge^p(\calM)$. If~$\rd_s \omega =0$, then one can compute a nonzero $L\in k\langle \{t, \partial_t\}\cup V\rangle$ and~$\mu \in \bigwedge^{p-1}(\calM)$ such that~
$
   L(\omega) = \rd_s \mu.
$

\end{thm}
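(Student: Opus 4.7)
My plan is to induct on $s$. The base case $s=0$ is immediate: $\rd_0\omega=0$ holds vacuously and $\rd_0\mu=0$ for every $\mu$, so the goal collapses to producing a nonzero $L\in k\langle\{t,\partial_t\}\cup V\rangle$ annihilating every coefficient of $\omega$. Since $|\{t,\partial_t\}\cup V|=n+2>n+1$, Lemma~\ref{LM:modifiedlipshitz} delivers such an $L$.

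For the inductive step I would decompose $\omega=\omega_0+\omega_1\wedge\rd x_s$ with $\omega_0\in\bigwedge^p(\calM)$ and $\omega_1\in\bigwedge^{p-1}(\calM)$ neither containing $\rd x_s$. Expanding $\rd_s\omega=0$ via the identities $\rd_s(u\wedge\rd x_s)=\rd_{s-1}u\wedge\rd x_s$ and $\rd_s u=\rd_{s-1}u+\rd x_s\wedge\partial_{x_s}u$ together with $\rd x_s\wedge\partial_{x_s}\omega_0=(-1)^p\partial_{x_s}\omega_0\wedge\rd x_s$ gives the pair
\[
\rd_{s-1}\omega_0=0,\qquad \rd_{s-1}\omega_1=(-1)^{p+1}\partial_{x_s}\omega_0.
\]
Applying the inductive hypothesis at level $s-1$ with the augmented set $V':=V\cup\{\partial_{x_s}\}$ (of size $n-s+1$) to $\omega_0$ produces a nonzero $L_A\in k\langle\{t,\partial_t,\partial_{x_s}\}\cup V\rangle$ and $\eta\in\bigwedge^{p-1}(\calM)$ with $L_A\omega_0=\rd_{s-1}\eta$. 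Because every generator of this ring commutes with $\partial_{x_s}$ and with $\partial_{x_1},\dots,\partial_{x_{s-1}}$, a short computation using the two relations shows that $\nu:=L_A\omega_1-(-1)^{p-1}\partial_{x_s}\eta$ is $\rd_{s-1}$-closed. A second application of the inductive hypothesis to $\nu$ with the same $V'$ yields $L_C$ and $\tau$ with $L_C\nu=\rd_{s-1}\tau$, and matching the $\rd x_s$- and non-$\rd x_s$-components confirms
\[
L_CL_A(\omega)=\rd_s\bigl(L_C\eta+\tau\wedge\rd x_s\bigr).
\]

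The final step is to eliminate $\partial_{x_s}$ from $L:=L_CL_A$. The two relations combined with $\rd_s\omega=0$ force the key identity $\partial_{x_s}\omega=(-1)^{p+1}\rd_s\omega_1$, whence $\partial_{x_s}^i\omega=(-1)^{p+1}\rd_s(\partial_{x_s}^{i-1}\omega_1)$ for all $i\geq 1$. Writing $L=\sum_iC_i\partial_{x_s}^i$ with $C_i\in k\langle\{t,\partial_t\}\cup V\rangle$ and substituting into $L\omega=\rd_s\mu$ collapses every $i\geq 1$ contribution into a single $\rd_s$-exact piece, leaving
\[
C_0\omega=\rd_s\Bigl(\mu-(-1)^{p+1}\sum_{i\geq 1}C_i\partial_{x_s}^{i-1}\omega_1\Bigr),
\]
so taking $L:=C_0$ finishes the proof provided $C_0\neq 0$. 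The main obstacle is ensuring precisely this nonvanishing. I would handle it by choosing $L_A$ of minimal $\partial_{x_s}$-order among telescopers for $\omega_0$: if the minimal such $L_A$ had $C_0(L_A)=0$, then $L_A=\partial_{x_s}\hat L_A$, so $\hat L_A\omega_0$ would be $\rd_{s-1}$-closed; applying the inductive hypothesis to $\hat L_A\omega_0$ with the alternative choice $V'':=V\cup\{x_s\}$ (yielding a factor with no $\partial_{x_s}$) and normalizing via $x_s\partial_{x_s}^k=\partial_{x_s}^kx_s-k\partial_{x_s}^{k-1}$ would produce a telescoper of strictly smaller $\partial_{x_s}$-order, contradicting minimality. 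The same argument handles $L_C$, and since the operator ring is an Ore domain, $C_0(L_CL_A)=C_0(L_C)\,C_0(L_A)\neq 0$.
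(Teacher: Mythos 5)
Your argument is correct and follows the paper's overall strategy (induction on $s$, splitting $\omega$ into its $\rd x_s$-free part and its $\rd x_s$-component, two nested applications of the inductive hypothesis with an augmented $V$), but the mechanism by which you eliminate the auxiliary generator is genuinely different. The paper calls the inductive hypothesis first with $V\cup\{x_s\}$ on $v=\omega_0$ and immediately strips the resulting operator of $x_s$ by left-multiplying by $\pa_{x_s}^d$ and absorbing the residual $*\pa_{x_s}$ term via the exactness of $\pa_{x_s}(v)$; it then calls the hypothesis with $V\cup\{\pa_{x_s}\}$ on $\tilde L(u)-\pa_{x_s}(\tilde\mu)$ and strips the left factor of $\pa_{x_s}$ by left-multiplying by $x_s^{e_1}$, folding the residue into the $\rd x_s\wedge\pa_{x_s}$ part of $\rd_s$. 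You instead use $V\cup\{\pa_{x_s}\}$ both times, compose, and eliminate $\pa_{x_s}$ only at the very end through the identity $\pa_{x_s}\omega=(-1)^{p+1}\rd_s\omega_1$, which isolates cleanly the reason why $\pa_{x_s}$ is removable at all; the price is that you must separately guarantee $C_0\neq 0$, a difficulty the paper's eager elimination never encounters. Your minimality argument for that nonvanishing is sound in outline, and your sign bookkeeping and the verification that $L_CL_A(\omega)=\rd_s(L_C\eta+\tau\wedge\rd x_s)$ are actually more careful than the paper's.

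The one place where your sketch is too thin is the contradiction step: after invoking the hypothesis on $\hat L_A\omega_0$ with $V''=V\cup\{x_s\}$ to get a nonzero $M\in k\langle\{t,\pa_t,x_s\}\cup V\rangle$ with $M\hat L_A\omega_0$ exact, the commutator identity $x_s\pa_{x_s}^k=\pa_{x_s}^kx_s-k\pa_{x_s}^{k-1}$ alone does not remove $x_s$ from $M\hat L_A$. You need the paper's Lemma~\ref{LM:basecase}-style move: write $M=\sum_{j=0}^{d}M_jx_s^j$ with $M_j\in k\langle\{t,\pa_t\}\cup V\rangle$, left-multiply by $\pa_{x_s}^{d}$ so that $\pa_{x_s}^{d}M=d!\,M_d+\tilde M\pa_{x_s}$, and then use $\tilde M\pa_{x_s}\hat L_A\omega_0=\tilde M L_A\omega_0=\rd_{s-1}(\tilde M\eta)$ to conclude that $M_d\hat L_A\omega_0$ is exact with $M_d\hat L_A\neq 0$ of strictly smaller $\pa_{x_s}$-order. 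With that step spelled out (and noting that $C_0$ is a ring homomorphism onto the domain $k\langle\{t,\pa_t\}\cup V\rangle$ because $\pa_{x_s}$ is central in $k\langle\{t,\pa_t,\pa_{x_s}\}\cup V\rangle$, which is the correct justification for $C_0(L_CL_A)=C_0(L_C)C_0(L_A)\neq 0$), your proof is complete.
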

\begin{remark}
\begin{enumerate}
\item
\label{rem:form} If $p=0$, then $\omega=f\in R$ and $\rd_s f=0$ if and only if $s=0$ or $\pa_{x_i}(f)=0$ for all $1\leq i \leq s$ if $s>0$. Therefore Corollary~\ref{COR:compatible} is a special case of Theorem~\ref{THM:ppl}.
\item Note that the parametrized Poincar\'{e} lemma is just the special case of Theorem~\ref{THM:ppl} when $s=n$.
\end{enumerate}
\end{remark}
\begin{proof}
We proceed by induction on~$s$. Assume that $s=0$ and write
$$
   \omega=\sum_{i=1}^m f_i \frakm_i
$$
where $\frakm_i$ a monomial in $\rd x_1, \rd x_2,\cdots, \rd x_n$ and $f_i\in R$. By Corollary~\ref{COR:compatible} with $I=\emptyset$, one can compute a nonzero $L\in k\langle \{t,\pa_t\}\cup V\rangle$ such that
$
   L(f_i)=0
$ for all $i=1,\cdots,m$.
Then one has that
$$
L(\omega)=\sum_{i=1}^m L(f_i)\frakm_i=0.
$$
This proves the base case. Now assume that the theorem holds for $s<\ell$ and consider the case $s=\ell$. Write
$$
   \omega=u\wedge \rd x_\ell + v
$$
where both $u$ and $v$ do not involve $\rd x_\ell$. Then the assumption $\rd_\ell \omega=0$ implies that
$$
  \rd_{\ell-1}u\wedge \rd x_\ell+\rd_\ell v=\rd_{\ell-1}u\wedge \rd x_\ell+\rd_{\ell-1} v+ \rd x_\ell \wedge \pa_{x_l}(v)=0.
$$
Since all of $\rd_{\ell-1}u, \rd_{\ell-1}v, \pa_{x_\ell}(v)$ do not involve $\rd x_\ell$, one has that
$\rd_{\ell-1} v=0$ and $\rd_{\ell-1}(u)-\pa_{x_\ell}(v)=0$. By the induction hypothesis, one can compute a nonzero $\tilde{L}\in k\langle \{t, x_\ell, \pa_t\}\cup V\rangle$ and $\tilde{\mu}\in \bigwedge^{p-1}(\calM)$ such that
\begin{equation}
\label{EQ:reduction1}
    \tilde{L}(v)= \rd_{\ell-1}(\tilde{\mu}).
\end{equation}
We claim that $\tilde{L}$ can be chosen to be free of $x_\ell$.
Write
$$\tilde{L}=\sum_{j=0}^d N_j x_\ell^d$$
where $N_j\in k\langle \{t,\pa_t\}\cup V\rangle$ and $N_d\neq 0$.
Multiplying $\tilde{L}$ by $\pa_{x_\ell}^d$ on the left and using the formula (\ref{EQ:formula2}) yield that
\begin{equation}
\label{EQ:reduction2}
   \pa_{x_\ell}^d \tilde{L}=\sum_{j=0}^d N_j\pa_{x_\ell}^d x_\ell^j=\alpha N_d+\tilde{N}\pa_{x_\ell}
\end{equation}
where $\alpha$ is a nonzero integer and $\tilde{N}\in k\langle \{t, x_\ell,\pa_t,\pa_{x_\ell}\}\cup V\rangle$. The equalities (\ref{EQ:reduction1}) and (\ref{EQ:reduction2}) together with $\pa_{x_\ell}(v)=\rd_{\ell-1}(\tilde{u})$ yield that
$
   N_d(v)=\rd_{\ell-1}(\pi)
$ for some $\pi\in \bigwedge^{p-1}(\calM)$.
This proves the claim. Now one has that
$$
   \tilde{L}(\omega)=\tilde{L}(u)\wedge \rd x_\ell + \rd_{\ell-1}(\tilde{\mu})=\tilde{L}(u)\wedge \rd x_\ell+\rd x_\ell\wedge \pa_{x_\ell}(\tilde{\mu})+\rd_\ell(\tilde{\mu}).
$$
Since $\tilde{L}$ is free of $x_1,\cdots,x_\ell$, $\tilde{L}\rd_\ell=\rd_\ell \tilde{L}$. This implies that
\begin{align*}
    0=\tilde{L}(\rd_\ell (\omega))=\rd_\ell(\tilde{L}(\omega))&=\rd_{\ell-1}(\tilde{L}(u))\wedge \rd x_\ell+ \rd x_\ell \wedge \rd_{\ell-1}(\pa_{x_\ell}(\tilde{\mu}))\\
    &=\rd_{\ell-1}\left(\tilde{L}(u)-\pa_{x_\ell}(\tilde{\mu})\right)\wedge \rd x_\ell.
\end{align*}
Note that $\tilde{\mu}$ can always be chosen to be free of $\rd x_\ell$. Hence one has that $\rd_{\ell-1}(\tilde{L}(u)-\pa_{x_\ell}(\tilde{\mu}))=0$. By the induction hypothesis, one can compute a nonzero $\bar{L}\in k\langle \{t, \pa_{x_\ell},\pa_t\}\cup V\rangle$ and $\bar{\mu}\in \bigwedge^{p-1}(\calM)$ such that
\begin{equation}
\label{EQ:reduction3}
\bar{L}\left(\tilde{L}(u)-\pa_{x_\ell}(\tilde{\mu})\right)=\rd_{\ell-1}(\bar{\mu}).
\end{equation}
Write
$$
   \bar{L}=\sum_{j=e_1}^{e_2}  \pa_{x_\ell}^j M_j
$$
where $M_j\in k\langle \{t, \pa_t\}\cup V\rangle$ and $M_{e_1}\neq 0$. Multiplying $\bar{L}$ by $x_\ell^{e_1}$ on the left and using the formula (\ref{EQ:formula2}) yield that
$$
   x_\ell^{e_1} \bar{L}=\beta M_{e_1}+ \pa_{x_\ell}\tilde{M}
$$
where $\beta$ is a nonzero integer and $\tilde{M}\in k\langle \{t,\pa_t,\pa_{x_\ell}, x_\ell\}\cup V\rangle$. Hence applying $x_\ell^{e_1}$ to the equality (\ref{EQ:reduction3}), one gets that
$$
   \beta M_{e_1}\left(\tilde{L}(u)-\pa_{x_\ell}(\tilde{\mu})\right)=\rd_{\ell-1}(x_\ell^{e_1}\bar{\mu})+\pa_{x_\ell}\left(\tilde{M}\left(\tilde{L}(u)-\pa_{x_\ell}(\tilde{\mu})\right)\right).
$$
Set $L=\beta M_{e_1}\tilde{L}$.
The one has that
\begin{align*}
     L(\omega)&=\beta M_{e_1}\left((\tilde{L}(u)-\pa_{x_\ell}(\tilde{\mu}))\wedge \rd x_\ell+\rd_\ell (\tilde{\mu})\right)\\
     &=\left(\beta M_{e_1}\left(\tilde{L}(u)-\pa_{x_\ell}(\tilde{\mu}\right)\right)\wedge \rd x_\ell +\rd_\ell(\beta M_{e_1}(\tilde{\mu}))\\
     &=\rd_{\ell-1}(x_\ell^{e_1}\bar{\mu})\wedge \rd x_\ell +\pa_{x_\ell}\tilde{M}\left(\tilde{L}(u)-\pa_{x_\ell}(\tilde{\mu})\right)\wedge \rd x_\ell + \rd_\ell(\beta M_{e_1}(\tilde{\mu}))\\
     &=\rd_\ell\left(x_\ell^{e_1}\bar{\mu}+\tilde{M}\left(\tilde{L}(u)-\pa_{x_\ell}(\tilde{\mu})\right)+\beta M_{e_1}(\tilde{\mu})\right).
\end{align*}
The last equality holds because
$$\rd_{\ell-1}\left(\tilde{M}\left(\tilde{L}(u)-\pa_{x_\ell}(\tilde{\mu})\right)\right)=\tilde{M}\rd_{\ell-1}\left(\tilde{L}(u)-\pa_{x_\ell}(\tilde{\mu})\right)=0.$$
\end{proof}

%

\begin{remark}
Lemma~\ref{LM:ppl} can be derived from the finiteness of the de Rham cohomology groups of $D$-modules in the Bernstein class. To see this, let $\omega$  be a differential $s$-form with coefficients in $R$ and let $M$ be the $D$-module generated by all coefficients of $\omega$ and all derivatives of these coefficients with respect to $\pa_t$.   By Proposition 5.2 on page 12 of \cite{bjork}, $M$ is a $D$-module in the Bernstein class. Assume that $\omega$ is closed. Then $\pa_t^j(\omega)\in H_{DR}^s(M)$, the $j$-th de Rham cohomology group of $M$, for all nonnegative integer $j$. By Theorem 6.1 on page 16 of \cite{bjork}, $H_{DR}^s(M)$ is of finite dimension over $k(t)$. This implies that there are $a_0,\cdots, a_m\in k(t)$ such that $\sum_{j=0}^m a_j \pa_t^j(\omega)=0$ in $H_{DR}^s(M)$, i.e. $\sum_{j=0}^m a_j \pa_t^j(\omega)$ is exact. This proves the existence of telescopers for the $\pa_t$-closed differential forms. However the proof of Theorem~\ref{THM:ppl} is constructive and it provides a method to compute a telescoper if it exists.
\end{remark}
The proof of Theorem~\ref{THM:ppl} can be summarized as the following algorithm.
\begin{algorithm}
\label{ALG:telescopers}
Input: $\omega\in \bigwedge^{p}(\calM)$ and $V\in \{x_i,\pa_{x_i}|i=s+1,\cdots,n\}$  satisfying that $\rd_s(\omega)=0$ and $|V|=n-s$ \\
Output: a nonzero $L\in k\langle \{t,\pa_t\}\cup V\rangle$ such that $L(\omega)=\rd_s(\mu)$.
\begin{enumerate}
\item  If $\omega\in R$, then by Corollary~\ref{COR:compatible}, compute a nonzero $L\in k\langle \{t, \pa_t\}\cup V \rangle$ such that $L(\omega)=0$. Return $L$.
\item  Write $\omega=u\wedge \rd x_s + v$ with $u,v$ not involving $\rd x_s$.
\item  Call Algorithm~\ref{ALG:telescopers} with $v$ and $V\cup \{x_s\}$ as inputs and let $\tilde{L}$ be the output.
\begin{enumerate}
\item Write $\tilde{L}=\sum_{j=0}^d N_j x_s^j $ with $N_j\in k\langle \{t,\pa_t, x_s\}\cup V\rangle$ and $N_d\neq 0$.
\item Compute a $\tilde{\mu}\in \bigwedge^{p-1}(\calM)$ such that $N_d(v)=\rd_{s-1}(\tilde{\mu})$.
\end{enumerate}
\item Write $N_d(\omega)=(N_d(u)-\pa_{x_s}(\tilde{\mu}))\wedge \rd x_s+ \rd_s(\tilde{\mu})$.
\item Call Algorithm~\ref{ALG:telescopers} with $N_d(u)-\pa_{x_s}(\tilde{\mu})$ and $V\cup \{\pa_{x_s}\}$ as inputs and let $\bar{L}$ be the output.
\item Write $\bar{L}=\sum_{j=e_1}^{e_2} \pa_{x_s}^j M_j$ with $M_j\in k\langle \{t,\pa_t\}\cup V\rangle$ and $M_{e_1}\neq 0$.
\item Return $M_{e_1}N_d$.
\end{enumerate}
\end{algorithm}
\section{The existence of telescopers}
\label{sec:existence}
It is easy to see that if a differential form is $\pa_t$-exact then it is $\pa_t$-closed.
Therefore Lemma~\ref{LM:ppl} implies that given a $\omega\in \bigwedge^p(\calM)$, to decide whether it has a telescoper, it suffices to decide whether there is a nonzero $L\in k\langle t, \pa_t \rangle$ such that $L(\rd\omega)=0$. Suppose that
$$\rd \omega=\sum_{1\leq i_1<\dots<i_{p+1}\leq n} a_{i_1,\dots,i_{p+1}}\rd x_{i_1}\cdots\rd x_{p+1}, \,\,a_{i_1,\dots,a_{p+1}}\in \calU.$$
Then $L(\rd \omega)=0$ if and only if $L(a_{i_1,\dots,i_{p+1}})=0$ for all $1\leq i_1<\cdots<i_{p+1}\leq n$.
So the existence problem of telescopers can be reduced to the following problem.
\begin{problem}
\label{prob1}
Given an element $f\in \calU$, decide whether there exists a nonzero $L\in k\langle t,\pa_t\rangle$ such that $L(f)=0$.
\end{problem}
Let $P\in K\langle \pa_t \rangle\setminus\{0\}$ be the monic operator of minimal order such that $P(f)=0$. Then $f$ is annihilated by a nonzero $L\in k(t)\langle \pa_t \rangle$ if and only if $P$ is a right-hand factor of $L$, i.e. $L=QP$ for some $Q\in K\langle \pa_t\rangle$.  Such operator $P$ will be  called a  $(\bfx,t)$-separable operator.
 Problem~\ref{prob1} then is equivalent to the following one.
\begin{problem}
\label{prob2}
Given a $P\in K\langle \pa_t \rangle\setminus\{0\}$, decide whether $P$ is $(\bfx,t)$-separable.
\end{problem}
The rest of this paper is aimed at developing an algorithm to solve the above problem. Let us first investigate the solutions of  $(\bfx,t)$-separable operators.
\begin{notation}
$$C_t:=\left\{ c\in \calU \mid \pa_t(c)=0\right\},\,\,C_\vx:=\left\{ c\in \calU \mid \forall\, x\in \vx, \pa_x(c)=0\right\}.$$
\end{notation}
Assume that  $L\in k(t)\langle \pa_t \rangle\setminus\{0\}$. By Corollary 1.2.12 of \cite{singer}, the solution space of $L=0$ in $\calU$ is a $C_t$-vector space of dimension $\ord(L)$. Moreover we have the following lemma.
\begin{lem}
\label{LM:solutions}
If $L\in k(t)\langle \pa_t \rangle\setminus\{0\}$, then the solution space of $L=0$ in $\calU$ has a basis in $C_\vx$.
\end{lem}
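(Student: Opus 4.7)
The goal is to produce, inside~$\calU$, an $r$-dimensional ($r = \ord L$) $C_t$-linearly independent set of solutions of $L(y) = 0$ that are killed by every~$\pa_{x_i}$. My approach is to build these solutions externally using Picard--Vessiot theory over $(k(t), \pa_t)$ and then transport them into~$\calU$ via Kolchin's universality, exploiting the fact that $L \in k(t)\langle \pa_t \rangle$ treats the variables $x_1, \ldots, x_n$ as mere parameters.

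In more detail, I would proceed as follows. First, because $L \in k(t)\langle \pa_t \rangle$ and $k$ is algebraically closed of characteristic zero, standard Picard--Vessiot theory produces a differential field extension $E/k(t)$ (with derivation~$\pa_t$) containing elements $y_1, \ldots, y_r$ with nonvanishing $\pa_t$-Wronskian satisfying $L(y_j) = 0$, and with $\C_{\pa_t}(E) = k$. Second, extend the other derivations $\pa_{x_1}, \ldots, \pa_{x_n}$ to~$E$ by declaring them identically zero on~$E$; this is consistent because they already vanish on $k(t)$, and the commutation relations $[\pa_t, \pa_{x_i}] = [\pa_{x_i}, \pa_{x_j}] = 0$ hold trivially on~$E$. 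Third, form the tensor product $E \otimes_{k(t)} K$, equip it with the $n+1$ commuting derivations (each factor carrying its own), and quotient by a maximal differential ideal to obtain a differential integral domain whose fraction field $F \supseteq K$ contains elements still denoted $y_j$ satisfying both $\pa_{x_i}(y_j) = 0$ and $L(y_j) = 0$. Fourth, by the universal property of~$\calU/K$, there is a $K$-embedding of differential fields $\iota \colon F \hookrightarrow \calU$; set $\bar y_j = \iota(y_j)$. Then $\bar y_j \in V \cap C_\vx$, where $V$ is the solution space of $L(y)=0$ in~$\calU$, and since $\iota$ commutes with~$\pa_t$ and the Wronskian $W_{\pa_t}(y_1, \ldots, y_r)$ is nonzero in~$F$, its image $W_{\pa_t}(\bar y_1, \ldots, \bar y_r)$ is nonzero in~$\calU$. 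Hence the $\bar y_j$ are $C_t$-linearly independent and form the required basis of~$V$.

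The main technical point to be careful about is the embedding step: I need to argue that the $y_j$ (and their Wronskian) survive the quotient of $E \otimes_{k(t)} K$ by a maximal differential ideal, and that Kolchin's universality indeed produces a $K$-embedding of the resulting differential field into~$\calU$. Both facts are standard in differential algebra and reduce to citations to Kolchin's book, but deserve explicit mention. The remaining verifications, namely the compatibility of the $n+1$ commuting derivations on the tensor product and the preservation of the Wronskian under~$\iota$, are routine.
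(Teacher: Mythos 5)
Your argument is correct, but it takes a genuinely different route from the paper's. You build the basis \emph{externally}: construct a Picard--Vessiot extension $E/k(t)$ for $L$, extend $\pa_{x_1},\dots,\pa_{x_n}$ by zero on $E$, base-change to $K$, and embed the resulting finitely generated differential extension $F$ into $\calU$ over $K$ by Kolchin's universality; the images of a fundamental system then lie in $C_\vx$, have nonvanishing $\pa_t$-Wronskian, hence are $C_t$-independent, and therefore span the $\ord(L)$-dimensional solution space. The paper instead works entirely \emph{inside} $\calU$: starting from an arbitrary basis $\vv=(v_1,\dots,v_d)^t$ of the solution space, it notes that $\pa_{x_l}(\vv)=A_l\vv$ with $A_l\in\Mat_d(C_t)$ and $\pa_t(A_l)=0$, verifies the integrability conditions forced by commutativity of the derivations, solves the integrable system $\pa_{x_1}(Y)=A_1Y,\dots,\pa_{x_m}(Y)=A_mY$, $\pa_t(Y)=0$ for an invertible $G$ over $\calU$, and replaces $\vv$ by $G^{-1}\vv$. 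Both proofs ultimately lean on the richness of $\calU$ (you to embed $F$, the paper to solve the integrable system), but the paper's gauge-transformation argument avoids tensor products and maximal differential ideals entirely, while yours makes the underlying reason transparent---$L$ treats $x_1,\dots,x_n$ as mere parameters---and bypasses the integrability computation. Two simplifications are available to you: since $K/k(t)$ is purely transcendental, $E\otimes_{k(t)}K$ is already an integral domain, so you may take $F=E(x_1,\dots,x_n)$ directly without invoking a maximal differential ideal; and since any $d=\ord(L)$ solutions with nonzero Wronskian automatically form a basis, you only need the standard upper bound on the dimension of the solution space rather than its exact value.
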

\begin{proof}
Let $d=\ord(L)>0$ and $\{v_1,\cdots,v_d\}$ be a basis of the solution space of $L=0$ in $\calU$. For all $1\leq i \leq d$ and all $1\leq l \leq m$,
$$
     L(\pa_{x_l}(v_i))=\pa_{x_l}(L(v_i))=0.
$$
Set $\vv=(v_1,\cdots,v_d)^t$. Then for each $l=1,\cdots,m$,
$$
    \pa_{x_l}(\vv)=A_l \vv, \,\, A_l\in \Mat_d(C_t).
$$
Since $\pa_{x_i}\pa_{x_j}(\vv)=\pa_{x_j}\pa_{x_i}(\vv)$ and $v_1,\cdots,v_d$ are linearly independent over $C_\vx$, for all $1\leq i < j \leq m$,
\begin{equation}
\label{EQ:compatible}
     \pa_{x_i}(A_j)-\pa_{x_j}(A_i)=A_iA_j-A_jA_i
\end{equation}
On the other hand, $\pa_t(A_i)=0$ for all $1\leq i \leq d$. These together with (\ref{EQ:compatible}) imply that
the system
$$\pa_{x_1}(Y)=A_1Y, \cdots, \pa_{x_m}(Y)=A_mY, \pa_t(Y)=0$$
is integrable. Then there is an invertible matrix $G$ with entries in $\calU$ satisfying this system.  Let $\bar{\vv}=G^{-1}\vv$. As $\pa_t(G^{-1})=0$, $\bar{\vv}$ is still a basis of the solution space of $L=0$ in $\calU$.  Furthermore, for each $i=1,\cdots,m$, we have
$$
     \pa_{x_i}(\bar{\vv})=\pa_{x_i}(G^{-1}\vv)=\pa_{x_i}(G^{-1})\vv+G^{-1}A_i\vv=-G^{-1}A_i\vv+G^{-1}A_i\vv=0.
$$
Thus $\bar{\vv}\in C_\vx^d$.
\end{proof}
As a consequence, we have the following corollary.
\begin{cor}
\label{COR:solutions}
Assume that $P\in K\langle \pa_t \rangle\setminus\{0\}$. Then $P$ is $(\bfx,t)$-separable if and only if the solutions of $P(y)=0$ in $\calU$ are of the form
\begin{equation}
\label{EQ:removableform}
     \sum_{i=1}^s g_i h_i,\,\, g_i\in C_t, h_i\in C_\vx\cap \{f\in \calU \mid P(f)=0\}.
\end{equation}
\end{cor}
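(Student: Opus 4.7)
The plan is to derive both directions of the equivalence from Lemma~\ref{LM:solutions}, with the monomial independence of $\{\vx^\beta\}$ over $C_\vx$ as the main technical ingredient for the converse.

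For the forward direction, assume $P$ is $(\vx,t)$-separable, so $L = QP$ for some nonzero $L \in k(t)\langle \pa_t\rangle$ and $Q \in K\langle \pa_t\rangle$. Every $f\in\calU$ with $P(f)=0$ then also satisfies $L(f)=0$, so $\{f\in\calU:P(f)=0\}$ is a $C_t$-subspace of $\{f\in\calU:L(f)=0\}$. By Lemma~\ref{LM:solutions} applied to $L$, this larger space has a $C_t$-basis $\{h_1,\dots,h_d\}$ lying in $C_\vx$, so every solution of $P(y)=0$ is a $C_t$-linear combination of the $h_i$'s. A further reduction inside the $C_t$-subspace $\ker P$ of $\ker L$ — using that $\ker P$ is cut out inside $\ker L$ by $C_t$-linear conditions and that the $h_i$'s are in $C_\vx$ — produces a spanning set lying in $C_\vx\cap\{f:P(f)=0\}$, yielding the claimed form.

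For the backward direction, suppose every solution of $P(y)=0$ has the form $\sum_{i=1}^s g_i h_i$ with $g_i\in C_t$ and $h_i\in C_\vx\cap\{f:P(f)=0\}$. The target is a nonzero $L \in k(t)\langle \pa_t\rangle$ with $\ker P \subseteq \ker L$; the minimality of $P$ then forces $P$ to right-divide $L$. The key step is to show that every $h \in C_\vx\cap\ker P$ is $D$-finite over $k(t)$: if so, the least common left multiple of the minimal $k(t)$-annihilators of a $C_t$-generating set for $\ker P$ contained in $C_\vx \cap \ker P$ gives the desired $L$. To establish $D$-finiteness of such an $h$, I would clear denominators in $P(h)=0$ to obtain $\tilde P(h)=0$ with $\tilde P\in k[t,\vx]\langle\pa_t\rangle\setminus\{0\}$, expand $\tilde P = \sum_\beta \vx^\beta Q_\beta$ with $Q_\beta\in k[t]\langle\pa_t\rangle$, and read this as $\sum_\beta \vx^\beta Q_\beta(h)=0$ in $\calU$. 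Because $h\in C_\vx$ and $k[t]\subseteq C_\vx$, each value $Q_\beta(h)$ lies in $C_\vx$. Invoking the $C_\vx$-linear independence of the monomials $\{\vx^\beta\}$ in $\calU$ then forces $Q_\beta(h)=0$ for every $\beta$, and since $\tilde P\neq 0$ some nonzero $Q_\beta\in k(t)\langle\pa_t\rangle$ annihilates $h$.

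The main obstacle is the $C_\vx$-linear independence of the monomials $\{\vx^\beta\}$ in $\calU$. The plan for this auxiliary claim is to apply $\pa_{x_1}^{k_1}\cdots\pa_{x_n}^{k_n}$ to a hypothetical relation $\sum_\beta q_\beta\vx^\beta=0$ with $q_\beta\in C_\vx$: since $\pa_{x_i}$ kills $C_\vx$ and acts as a shift on $\vx^\beta$, one can choose the $k_i$ to isolate the lexicographically maximal surviving term and conclude its coefficient vanishes, then iterate to zero out every $q_\beta$. With this ingredient in hand, the forward direction is a direct application of Lemma~\ref{LM:solutions}, and the backward direction is packaged by the LCLM construction above.
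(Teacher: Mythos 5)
Your backward (``if'') direction is correct but takes a genuinely different route from the paper's. The paper picks $h\in C_\vx\cap\{f\in\calU\mid P(f)=0\}$, lets $L$ be its \emph{minimal} monic annihilator in $K\langle\pa_t\rangle$, applies $\pa_{x_j}$ to $L(h)=0$ (the term $L(\pa_{x_j}(h))$ vanishes because $h\in C_\vx$), and invokes minimality to force $\pa_{x_j}(a_i)=0$, so that $L$ already lies in $k(t)\langle\pa_t\rangle$. You instead clear denominators, expand $\tilde P=\sum_\beta\vx^\beta Q_\beta$ with $Q_\beta\in k[t]\langle\pa_t\rangle$, and use the $C_\vx$-linear independence of the monomials $\vx^\beta$ in $\calU$ to get $Q_\beta(h)=0$ for all $\beta$. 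That independence argument is fine (a lex-maximal exponent is in particular maximal for the componentwise order, so $\pa_{x_1}^{k_1}\cdots\pa_{x_n}^{k_n}$ does isolate its coefficient), and your version avoids any appeal to minimality while producing an annihilator with coefficients in $k[t]$. Both versions finish the same way: take an lclm over a spanning set and use that $\ker P\subseteq\ker L$ forces $P$ to be a right-hand factor of $L$.

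The forward (``only if'') direction has a genuine gap at the ``further reduction'' step, and that step cannot be repaired: it is not true in general that $\ker P$ is spanned over $C_t$ by $C_\vx\cap\ker P$. Take $n=1$ and $P=\pa_t-\tfrac{x}{xt+1}$, the right-hand factor in the paper's own factorization $\pa_t^2=(\pa_t+\tfrac{x}{xt+1})(\pa_t-\tfrac{x}{xt+1})$; then $P$ is $(\vx,t)$-separable and its solution space in $\calU$ is $C_t\cdot(xt+1)$. If $c(xt+1)\in C_\vx$ with $c\in C_t$, then $0=\pa_x(c)(xt+1)+ct$, so $\pa_x(c)=-ct/(xt+1)$; the left-hand side lies in $C_t$ because the derivations commute, while $\pa_t\bigl(ct/(xt+1)\bigr)=c/(xt+1)^2$, forcing $c=0$. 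Hence $C_\vx\cap\ker P=\{0\}$ although $\ker P\neq 0$. Your idea of exploiting that ``$\ker P$ is cut out inside $\ker L$ by $C_t$-linear conditions'' does not help: the conditions are $\sum_i c_iP(h_i)=0$, whose coefficients $P(h_i)$ are not in $C_\vx$, so a basis of the solution space in $C_t^d$ need not have entries in $C_t\cap C_\vx$. What Lemma~\ref{LM:solutions} actually yields --- and what your first two sentences correctly prove --- is that every solution of $P(y)=0$ is $\sum g_ih_i$ with $h_i\in C_\vx\cap\{f\mid L(f)=0\}$ for $L=QP\in k(t)\langle\pa_t\rangle$. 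The stronger requirement $h_i\in\ker P$ in (\ref{EQ:removableform}) is exactly the point at issue; the paper's one-line ``direct consequence'' glosses over it, and its later uses of this corollary only need that the solutions are semisplit, i.e.\ $h_i\in C_\vx$.
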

\begin{proof}
The ``only if" part is a direct consequence of Lemma~\ref{LM:solutions}. For the ``if" part, one only need to prove that if $h\in C_\vx\cap \{f\in \calU \mid P(f)=0\}$ then $h$ is annihilated by a nonzero operator in $k(t)\langle \pa_t \rangle$. Suppose that $h\in C_\vx\cap\{f\in \calU \mid P(f)=0\}$. Let $L$ be the monic operator in $K\langle \pa_t \rangle\setminus \{0\}$ which annihilates $h$ and is of minimal order. Write
$$
   L=\pa_t^\ell+\sum_{i=0}^{\ell-1} a_i \pa_t^i, a_i\in K.
$$
Then for every $j\in \{1,\dots,m\}$
$$
  0=\pa_{x_j}(L(h))=\sum_{i=0}^{\ell-1} \pa_{x_j}(a_i) \pa_t^i(h)+L(\pa_{x_j}(h))=\sum_{i=0}^{\ell-1} \pa_{x_j}(a_i) \pa_t^i(h).
$$
The last equality holds because $h\in C_\bfx$. By the miniality of $L$, one sees that $\pa_{x_j}(a_i)=0$ for all $i=0,\dots,\ell-1$ and all $j=1,\dots,m$. Hence $a_i\in k(t)$ for all $i$. In other words, $L\in k(t)\langle \pa_t \rangle$.
\end{proof}
For convention, we introduce the following definition.
\begin{definition}
\begin{itemize}
\item [$(1)$]
We say $f\in \calU$ is  split if it can be written as the form $f=gh$ where $g\in C_t$ and $h\in C_\vx$, and say $f$ is semisplit if it is the sum of finitely many split elements.
\item [$(2)$] We say a nonzero operator $P\in K\langle \pa_t \rangle$ is semisplit if it is monic and all its coefficients are semisplit.
\end{itemize}
\end{definition}
The semisplit operators have the following property.
\begin{lem}
\label{LM:samecoefficients}
Assume that $P=Q_1Q_2$ where $P,Q_1,Q_2$ are monic operators in $K\langle \pa_t \rangle$. Assume further that $Q_2\in k(t)[\bfx,1/r]\langle \pa_t \rangle$ where $r\in k[\bfx,t]$. Then $P\in k(t)[\bfx,1/r]\langle \pa_t \rangle$ if and only if so is $Q_1$.
\end{lem}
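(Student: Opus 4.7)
The plan is to treat the two directions separately and note that the easy one gives the ``if'' direction while the ``only if'' direction is obtained by peeling off the coefficients of $Q_1$ one at a time.

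First I would check that $k(t)[\bfx,1/r]\langle \pa_t\rangle$ is in fact a subring of $K\langle\pa_t\rangle$. The point is that $\pa_t$ preserves $k(t)[\bfx,1/r]$ since $\pa_t(x_i)=0$ and $\pa_t(1/r)=-\pa_t(r)/r^2\in k(t)[\bfx,1/r]$, so the commutation rule $\pa_t\cdot c=c\pa_t+\pa_t(c)$ stays inside this ring. Consequently, if $Q_1\in k(t)[\bfx,1/r]\langle\pa_t\rangle$, then together with the hypothesis $Q_2\in k(t)[\bfx,1/r]\langle\pa_t\rangle$ we get $P=Q_1Q_2\in k(t)[\bfx,1/r]\langle\pa_t\rangle$ immediately.

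For the nontrivial direction, assume $P\in k(t)[\bfx,1/r]\langle\pa_t\rangle$ and write
\[
   Q_1=\pa_t^{d_1}+\sum_{i=0}^{d_1-1}a_i\pa_t^i,\qquad Q_2=\pa_t^{d_2}+\sum_{j=0}^{d_2-1}b_j\pa_t^j,
\]
with $a_i\in K$ and $b_j\in k(t)[\bfx,1/r]$, and set $a_{d_1}=1$. The plan is to compute the coefficient of $\pa_t^{d_2+i_0}$ in $P=\sum_{i}a_i(\pa_t^iQ_2)$ for each $i_0=d_1,d_1-1,\dots,0$. Expanding $\pa_t^iQ_2$ via the Leibniz rule $\pa_t^i b_j=\sum_{k=0}^{i}\binom{i}{k}\pa_t^k(b_j)\pa_t^{i-k}$ gives
\[
   \pa_t^i Q_2=\pa_t^{i+d_2}+\sum_{j,k}\binom{i}{k}\pa_t^k(b_j)\pa_t^{i-k+j},
\]
whose coefficients lie in $k(t)[\bfx,1/r]$ since this ring is stable under $\pa_t$. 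A direct bookkeeping (using that $j\le d_2-1$ forces $k=j-d_2<0$ in the $i=i_0$ case) shows that the coefficient of $\pa_t^{d_2+i_0}$ in $P$ is exactly
\[
   a_{i_0}+\sum_{i>i_0}a_i\,c_{i,d_2+i_0},\qquad c_{i,d_2+i_0}\in k(t)[\bfx,1/r].
\]
Since this coefficient lies in $k(t)[\bfx,1/r]$ by assumption, a descending induction on $i_0$ (starting from $a_{d_1}=1$) lets us solve for $a_{i_0}$ in $k(t)[\bfx,1/r]$, proving $Q_1\in k(t)[\bfx,1/r]\langle\pa_t\rangle$.

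The main (mild) obstacle is the combinatorial bookkeeping that isolates $a_{i_0}$ with coefficient $1$ in the expansion above; once that triangular structure is verified, the induction is immediate.
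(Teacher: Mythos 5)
Your proof is correct and follows exactly the paper's approach: the paper's entire proof is ``Comparing the coefficients on both sides of $P=Q_1Q_2$ concludes the lemma,'' and your descending induction on the coefficients of $Q_1$, exploiting the triangular structure coming from monicity and the stability of $k(t)[\bfx,1/r]$ under $\pa_t$, is precisely the bookkeeping that one-line proof leaves implicit.
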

\begin{proof}
Comparing the coefficients on both sides of $P=Q_1Q_2$ concludes the lemma.
\end{proof}
As a direct consequence, we have the following corollary.
\begin{cor}
\label{cor:semisplitoperators}
Assume that $P=Q_1Q_2$ where $P,Q_1,Q_2$ are monic operators in $K\langle \pa_t \rangle$. Assume further that $Q_2$ is semisplit. Then $P$ is semisplit if and only if so is $Q_1$.
\end{cor}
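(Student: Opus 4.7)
The plan is to bootstrap Lemma~\ref{LM:samecoefficients} via the following characterization: a coefficient $a\in K$ is semisplit if and only if $a\in k(t)[\bfx,1/r]$ for some $r\in k[\vx]$ (note: the $r$ must depend only on the spatial variables). The ``if'' half is straightforward: expanding $a=\sum_\alpha c_\alpha(t)\vx^\alpha/r^{n_\alpha}$ exhibits $a$ as a finite sum of split elements, since each $c_\alpha(t)\in k(t)\subset C_\vx$ and each $\vx^\alpha/r^{n_\alpha}\in k(\vx)\subset C_t$. The ``only if'' half is more subtle: starting from $a=\sum_i g_ih_i\in K$ with $g_i\in C_t$ and $h_i\in C_\vx$ a priori living in $\calU$, one must re-express the decomposition with $\tilde g_j\in k(\vx)$ and $\tilde h_j\in k(t)$, and then collect the denominators of the $\tilde g_j$'s to produce a single $r\in k[\vx]$. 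This step reduces to the identity $K\cap (C_t\cdot C_\vx)=k(\vx)\cdot k(t)$, which follows from the linear disjointness of $C_t$ and $C_\vx$ over the common constant field $k=C_t\cap C_\vx$.

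Granting the characterization, the corollary unfolds as follows. Since $Q_2$ is semisplit, taking an LCM of the denominators of its finitely many coefficients yields $r_2\in k[\vx]$ with $Q_2\in k(t)[\bfx,1/r_2]\langle\pa_t\rangle$. If $Q_1$ is semisplit with witness $r_1\in k[\vx]$, then with $r=r_1r_2\in k[\vx]$ all three of $Q_1$, $Q_2$, and $P=Q_1Q_2$ lie in $k(t)[\bfx,1/r]\langle\pa_t\rangle$; since $r\in k[\vx]$, the ``if'' direction of the characterization forces every coefficient of $P$ to be semisplit. Conversely, if $P$ is semisplit with witness $r_P\in k[\vx]$, then $P,Q_2\in k(t)[\bfx,1/(r_Pr_2)]\langle\pa_t\rangle$, so Lemma~\ref{LM:samecoefficients} forces $Q_1\in k(t)[\bfx,1/(r_Pr_2)]\langle\pa_t\rangle$; since $r_Pr_2\in k[\vx]$, $Q_1$ is semisplit.

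The main obstacle is the ``only if'' half of the characterization, i.e.\ verifying the identity $K\cap (C_t\cdot C_\vx)=k(\vx)\cdot k(t)$ inside $\calU$. This is a standard consequence of the linear disjointness of $C_t$ and $C_\vx$ over $k$ (itself a consequence of the commutativity of $\pa_t$ with the $\pa_{x_j}$'s), but it must be invoked explicitly to descend a decomposition living in $\calU$ down to one lying within $K$. Once this descent and the restriction of $r$ to the pure $\vx$-polynomial ring are handled, the remainder of the argument is bookkeeping with common denominators together with two applications of Lemma~\ref{LM:samecoefficients}.
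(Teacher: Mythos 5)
Your strategy can be made to work, but it is a genuinely different and much heavier route than the paper's, and its load-bearing step is not actually proved. The paper's proof is the same one-line coefficient comparison that proves Lemma~\ref{LM:samecoefficients} itself: the semisplit elements of $\calU$ form a $k$-subalgebra closed under $\pa_t$ (products of split elements are split, and $\pa_t(gh)=g\,\pa_t(h)$ is again split because the derivations commute), and since $Q_2$ is monic, equating the coefficients of $\pa_t^{m+l-1}, \pa_t^{m+l-2},\dots$ in $P=Q_1Q_2$ expresses each coefficient of $Q_1$ as a $\bZ$-polynomial in coefficients of $P$, coefficients of $Q_2$, their $\pa_t$-derivatives, and the previously determined coefficients of $Q_1$ --- no division and no structure theory of $K$ is needed, and both directions of the equivalence fall out at once. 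Your detour through the characterization ``$a\in K$ is semisplit iff $a\in k(t)[\bfx,1/r]$ for some $r\in k[\vx]$'' buys nothing for this corollary and imports the hardest claim in the neighborhood.

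That claim --- the ``only if'' half, i.e.\ the descent $K\cap(C_t\cdot C_\vx)=k(\vx)\cdot k(t)$ --- is where your write-up has a real gap. First, $C_t\cap C_\vx$ is in general strictly larger than $k$ in the universal extension $\calU$, so the identification ``$k=C_t\cap C_\vx$'' already fails. More importantly, linear disjointness of $C_t$ and $C_\vx$ over their common constants only tells you that a representation $a=\sum_i g_ih_i$ with the $h_i$ independent over the common constants is essentially unique; it does not tell you that the factors can be chosen inside $K$. (For instance $t=c\cdot(c^{-1}t)$ for any common constant $c$, so minimality of a representation does not force its factors into $k(\vx)$ and $k(t)$.) To carry out the descent one needs an additional argument, e.g.: the $k(\vx)$-span of all $\pa_{x}^{\alpha}(a)$ sits inside the $C_t$-span of $h_1,\dots,h_s$ and is therefore finite dimensional, because elements of $K$ that are linearly independent over $k(\vx)=K\cap C_t$ remain independent over $C_t$ (a Wronskian argument with respect to $\pa_t$); one then extracts a basis of that span inside $K$ and rebuilds the decomposition over $k(\vx)$ and $k(t)$. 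All of this is true and standard, but it is several lemmas' worth of work compressed into one appeal to ``linear disjointness''. Since the corollary follows from the elementary coefficient comparison above, I would drop the characterization here entirely; it is genuinely needed only later (in Section~\ref{sec:existence}, where the paper also uses it without proof), and if you want to invoke it there you should supply the descent argument in full.
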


\subsection{The completely reducible case}
In Proposition 10 of \cite{chen-feng-li-singer}, we show that given a hyperexponential function $h$ over $K$, $\ann(h)\cap k(t)\langle \pa_t \rangle\neq \{0\}$ if and only if there is a nonzero $p\in k(\bfx)[t]$ and $r\in k(t)$ such that
$$
   a=\frac{\pa_t(p)}{p}+r,
$$
where $a=\pa_t(h)/h$. Remark that $a,p, r$ with $p\neq 0$ satisfy the above equality if and only if $\frac{1}{p}(\pa_t-a)=(\pa_t-r)\frac{1}{p}$. Under the notion of $(\bfx,t)$-separable and the language of differential operators, Proposition 10 of \cite{chen-feng-li-singer} states that $\pa_t-a$ is $(\bfx,t)$-separable if and only if it is similar to a first order operator in $k(t)\langle \pa_t\rangle$ by some $1/p$ with $p$ being nonzero polynomial in $t$.  In this section, we shall generalize Proposition 10 of \cite{chen-feng-li-singer} to the case of completely reducible operators.
We shall use $\lclm(Q_1,Q_2)$ to denote the monic operator of minimal order which is divisible by both $Q_1$ and $Q_2$ on the right.
We shall prove that if $P$ is $(\bfx,t)$-separable and completely reducible then there is a nonzero $L\in k(t)\langle \pa_t\rangle$ such that $P$ is the transformation of $L$ by  some $Q$ with semisplit coefficients.
To this end, we need to introduce some notations from \cite{ore}.

\begin{definition}
Assume that $P, Q\in K\langle \pa_t \rangle \setminus\{0\}$.
\begin{enumerate}
\item We say $\tilde{P}$ is the transformation of $P$ by $Q$ if $\tilde{P}$ is the monic operator satisfying that $\tilde{P}Q=\lambda \lclm(P,Q)$ for some $\lambda\in K$.
\item We say $\tilde{P}$ is similar to $P$ (by $Q$) if there is an operator $Q$ with $\gcrd(P,Q)=1$ such that $\tilde{P}$ is the transformation of $P$ by $Q$, where $\gcrd(P,Q)$ denotes the greatest common right-hand factor of $P$ and $Q$.
\end{enumerate}
\end{definition}
\begin{definition}
\begin{enumerate}
\item
We say $P\in K\langle \pa_t \rangle$ is completely reducible if it is the lclm of a family of irreducible operators in $K\langle \pa_t \rangle$.
\item
We say $Q\in K\langle \pa_t \rangle$ is the maximal completely reducible right-hand factor of $P\in K\langle \pa_t\rangle$ if $Q$ is the lclm of all irreducible right-hand factros of $P$.
\end{enumerate}
\end{definition}
Given a $P\in K\langle \pa_t \rangle$, Theorem 7 of \cite{ore} implies that $P$ has the following unique decomposition called the maximal completely reducible decomposition or the m.c.r. decomposition for short,
$$
   P=\lambda H_r H_{r-1} \dots H_1
$$
where $\lambda\in K$ and $H_i$ is the maximal completely reducible right-hand factor of $H_r \dots H_i$. For an $L\in k(t)\langle \pa_t \rangle$, it has two m.c.r. decompositions viewed it as an operator in $k(t)\langle \pa_t \rangle$ and an operator in $K\langle \pa_t \rangle$ respectively. In the following, we shall prove that these two decompositions coincide.
For convenience, we shall denote by $P_{x_i=c_i}$ the operator obtained by replacing $x_i$ by $c_i\in k$ in $P$.
\begin{lem}
\label{LM:gcrd}
Assume that $P, L$ are two monic operators in $K\langle \pa_t \rangle$. Assume further that $P\in k(t)[\bfx,1/r]\langle \pa_t \rangle$ with $r\in k[\bfx,t]$, and $L\in k(t)\langle \pa_t \rangle$. Let $\vc\in k^m$ be such that $r(\vc)\neq 0$.
\begin{enumerate}
\item
If $\gcrd(P_{\bfx=\bfc}, L)=1$ then $\gcrd(P,L)=1$.
\item If $\gcrd(P,L)=1$ then there is $\va\in k^m$ such that $r(\va)\neq 0$ and $\gcrd(P_{\bfx=\va},L)=1$.
\end{enumerate}
\end{lem}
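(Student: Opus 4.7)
The plan is to reduce both parts to the non-vanishing of a single element of $K$ built from the coefficients of $P$ and $L$ via a non-commutative Sylvester-type construction. Let $m=\ord(P)$ and $n=\ord(L)$, and consider the $K$-linear map
\[
   \phi(U,V)=UP+VL
\]
from the space of pairs $(U,V)\in K\langle\pa_t\rangle^2$ with $\ord U<n$ and $\ord V<m$ to the space of $A\in K\langle\pa_t\rangle$ with $\ord A<m+n$. Both spaces have $K$-dimension $m+n$, and in the standard $\pa_t$-monomial bases $\phi$ is represented by a square matrix $M(P,L)$ whose entries are universal polynomial expressions in the coefficients of $P$, $L$, and their iterated $\pa_t$-derivatives (obtained by expanding each $\pa_t^i P$ via the Leibniz rule).

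The key equivalence to establish is $\gcrd(P,L)=1 \Longleftrightarrow \det M(P,L)\neq 0$. If $(U,V)\in\ker\phi\setminus\{(0,0)\}$, then $UP=-VL$ lies in $K\langle\pa_t\rangle P\cap K\langle\pa_t\rangle L=K\langle\pa_t\rangle\,\lclm(P,L)$; when $\gcrd(P,L)=1$ every nonzero element of this intersection has order $\geq m+n$, contradicting $\ord(UP)\leq m+n-1$, so $\ker\phi=0$. Conversely, if $\gcrd(P,L)$ has order $d\geq 1$, then writing $\lclm(P,L)=AP=BL$ with $\ord A=n-d<n$ and $\ord B=m-d<m$ produces a nonzero element $(A,-B)\in\ker\phi$, forcing $\det M(P,L)=0$.

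Next I analyse the $\vx$-dependence of $M(P,L)$. Since the coefficients of $P$ lie in $k(t)[\vx,1/r]$, those of $L$ in $k(t)$, and the ring $k(t)[\vx,1/r]$ is stable under $\pa_t$ (because $\vx$ is $\pa_t$-independent and $\pa_t(1/r)=-\pa_t(r)/r^2$), every entry of $M(P,L)$ lies in $k(t)[\vx,1/r]$. Hence
\[
   \det M(P,L) = \frac{p(t,\vx)}{r^N}
\]
for some $p\in k(t)[\vx]$ and $N\geq 0$. Monicity of $P$ is crucial here: for any $\vc$ with $r(\vc)\neq 0$, the specialization $P_{\vx=\vc}$ still has order $m$, so $M(P_{\vx=\vc},L)$ coincides with $M(P,L)_{\vx=\vc}$ as a matrix of size $m+n$, and $\det M(P_{\vx=\vc},L)=p(t,\vc)/r(t,\vc)^N$.

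The two parts now follow easily. For part~(1), $\gcrd(P_{\vx=\vc},L)=1$ forces $p(t,\vc)\neq 0$ in $k(t)$, hence $p\neq 0$ in $k(t)[\vx]$, hence $\det M(P,L)\neq 0$, hence $\gcrd(P,L)=1$. For part~(2), $\gcrd(P,L)=1$ gives $p\neq 0$; since $p\cdot r\in k(t)[\vx]\setminus\{0\}$ and $k$ is infinite (being algebraically closed of characteristic zero), there exists $\va\in k^m$ with $p(t,\va)r(t,\va)\neq 0$ in $k(t)$, and such $\va$ automatically satisfies both $r(\va)\neq 0$ and $\gcrd(P_{\vx=\va},L)=1$. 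The main obstacle is the non-commutative Sylvester-type equivalence; once that is verified by the $\ker\phi$-versus-$\lclm$ argument above, the specialization step reduces to elementary polynomial arithmetic over the infinite field $k$.
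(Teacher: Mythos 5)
Your proof is correct, but it takes a genuinely different route from the paper's. The paper proves part~(1) by contraposition and induction on the number of variables: from $\gcrd(P,L)\neq 1$ it extracts a syzygy $MP+NL=0$ with $\ord(M)<\ord(L)$, normalizes $M,N$ so that their coefficients have no common factor, and observes that specialization then preserves the nontrivial syzygy; part~(2) is done separately by specializing a B\'ezout identity $MP+NL=1$ at a point where the denominators of $M,N$ survive. You instead package both directions into the non-vanishing of a single Ore--Sylvester determinant $\det M(P,L)$, using that $UP=-VL$ lands in $K\langle\pa_t\rangle\,\lclm(P,L)$ together with Ore's order formula $\ord\lclm(P,L)+\ord\gcrd(P,L)=\ord P+\ord L$ (standard, and consistent with the paper's reliance on \cite{ore}), and then reduce both parts to elementary specialization of the rational function $p/r^N$. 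What your approach buys is uniformity: no induction on $|\bfx|$, no primitive-part normalization, and part~(1) becomes immediate because $M(P_{\bfx=\bfc},L)=M(P,L)_{\bfx=\bfc}$ (here monicity of $P$ and the $\pa_t$-stability of $k(t)[\bfx,1/r]$, which you correctly verify, are exactly the hypotheses needed). What it costs is the extra machinery of the resultant matrix and the lclm/gcrd order identity, where the paper's argument stays entirely at the level of the Euclidean relations. Two cosmetic remarks: your $m=\ord(P)$ collides with the $m$ in $\vc\in k^m$ from the statement, and you should note explicitly that $\gcrd(P_{\bfx=\bfc},L)$ is unchanged whether computed over $k(t)$ or over $K$ (it is, since the Euclidean algorithm does not leave the subfield), so the determinant criterion applies to the specialized pair as you use it. Neither affects correctness.
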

\begin{proof}
1. We shall prove the lemma by induction on $m=|\bfx|$. Assume that $m=1$, and $\gcrd(P,L)\neq 1$. Then there are $M,N\in k(t)[x_1]\langle \pa_t \rangle$ with $\ord(M)<\ord(L)$ such that
$
   MP+NL=0.
$
Write $$M=\sum_{i=0}^{n-1} a_i\pa_t^i, \quad N=\sum_{i=0}^s b_i\pa_t^i$$
where $n=\ord(L)$. If the $a_i$'s have a common factor $c$ in $k(t_1)[x_1]$, then one sees that $c$ is a common factor of the $b_i$'s. Thus we can cancel this factor $c$. So without loss of generality, we may assume that the $a_i$'s have no common factor. This implies that $M_{x_1=c_1}\neq 0$ and $M_{x_1=c_1}P_{x_1=c_1}+N_{x_1=c_1}L=0$. Since $\ord(M_{x_1=c_1})<\ord(L)$, $\gcrd(P_{x_1=c_1}, L)\neq 1$, a contradiction. For the general case, set $Q=P_{x_1=c_1}$. Then $Q_{x_2=c_2,\dots,x_m=c_m}=P_{\bfx=\vc}$. This implies that $\gcrd(Q_{x_2=c_2,\dots,x_m=c_m},L)=1$. By the induction hypothesis, $\gcrd(Q,L)=1$. Finally, regarding $P$ and $L$ as operators with coefficients in $k(t,x_2,\dots,x_m)[x_1,1/r]$ and by the induction hypothesis again, we get $\gcrd(P,L)=1$.

2. Since $\gcrd(P,L)=1$, there are $M,N\in K\langle \pa_t \rangle$ such that $MP+NL=1$. Let $\va\in k^m$ be such that $r(\va)\neq 0$ and both $M_{\bfx=\va}$ and $N_{\bfx=\va}$ are well-defined. For such $\va$, one has that $M_{\bfx=\va}P_{\bfx=\va}+N_{\bfx=\va}L=1$ and then $\gcrd(P_{\bfx=\va},L)=1$.
\end{proof}
\begin{lem}
\label{LM:mcrdecomposition}
Let $L\in k(t)\langle \pa_t \rangle$. The m.c.r. decompositions of $L$ viewed as an operator in $k(t)\langle \pa_t \rangle$ and an operator in $K\langle \pa_t \rangle$ respectively coincide.
\end{lem}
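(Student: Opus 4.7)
The plan is to start from the m.c.r.\ decomposition of $L$ over $k(t)$, say $L = \mu\, G_s G_{s-1}\cdots G_1$ with $\mu \in k(t)$ and each $G_i \in k(t)\langle \pa_t\rangle$ monic, and verify that this same factorization satisfies the defining properties of an m.c.r.\ decomposition when $L$ is viewed in $K\langle \pa_t\rangle$. Combined with the uniqueness of m.c.r.\ decompositions (Theorem~7 of \cite{ore}), this will give the conclusion. Two claims have to be checked: that each $G_i$ remains completely reducible over $K$, and that each $G_i$ is still the \emph{maximal} completely reducible right-hand factor of $G_s\cdots G_i$ over $K$.

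For the first claim, I would write $G_i = \lclm(P_1,\dots,P_m)$ with each $P_j$ irreducible in $k(t)\langle \pa_t\rangle$, and show that each $P_j$ remains completely reducible over $K$. The key observation is that the inclusion $(k(t),\pa_t) \hookrightarrow (K,\pa_t)$ is an extension of $\pa_t$-constants, since the $\pa_t$-constants of $K$ form $C_\vx = k(\vx)$, properly enlarging the constants $k$ of $k(t)$. Under such a constants extension, a simple $\pa_t$-module becomes isotypic (hence still semisimple) after base change, by a standard Schur-lemma argument on the endomorphism division ring of the simple module. Thus each $P_j$, and hence their $\lclm$ $G_i$, stays completely reducible over $K$.

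For the second claim, I would use a Galois-averaging argument. For each $\vc = (c_1,\dots,c_n) \in k^n$, the translation $\sigma_\vc: x_i \mapsto x_i + c_i$ is a $k(t)$-automorphism of $K$ that commutes with $\pa_t$ (since each $c_i \in k$ is a $\pa_t$-constant), so it lifts to a ring automorphism of $K\langle \pa_t\rangle$ fixing $\pa_t$; moreover, the subfield of $K$ fixed by all such translations is exactly $k(t)$, because a rational function in $\vx$ over $k(t)$ invariant under all shifts $\vx \mapsto \vx+\vc$ must be constant in $\vx$. Given any monic completely reducible right factor $H$ of $G_s\cdots G_i$ over $K$, every $\sigma_\vc(H)$ is again such a right factor (since $G_s\cdots G_i$ is fixed by $\sigma_\vc$), and the operator $H^* := \lclm_{\vc \in k^n}\sigma_\vc(H)$ is well defined and of bounded order because all of its members divide the fixed operator $G_s\cdots G_i$ on the right. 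Moreover $H^*$ is completely reducible (an $\lclm$ of $\lclm$s of irreducibles) and translation-invariant, hence has coefficients in $k(t)$. The maximality of $G_i$ over $k(t)$ then forces $H^* \mid G_i$ on the right, and since $H \mid H^*$ we obtain $H \mid G_i$ over $K$, proving maximality over $K$.

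The main obstacle is the first step: justifying preservation of semisimplicity of a $\pa_t$-module under the constants extension $k(t) \subset K$. This is classical in differential Galois theory, but must be invoked with care; the argument rests on Schur's lemma applied to the endomorphism division ring of a simple $\pa_t$-module, together with the isotypic decomposition that emerges when one enlarges the field of $\pa_t$-constants. The Galois-averaging in the second step is then a routine manipulation once the translations $\sigma_\vc$ have been identified as the correct invariance group with fixed field $k(t)$.
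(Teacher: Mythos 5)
Your strategy is genuinely different from the paper's (which specializes $\bfx$ to a point $\vc\in k^m$ with $r(\vc)\neq 0$ and uses Lemma~\ref{LM:gcrd} plus Lemma~\ref{LM:samecoefficients}), and the skeleton — check that the $k(t)$-decomposition already satisfies the defining properties over $K$, then invoke uniqueness — is the right one. But as written there are two genuine gaps, and both sit exactly where the real content of the lemma lives. First, the ascent step: the assertion that a simple $\pa_t$-module over $k(t)$ stays semisimple (or isotypic) after extending the constants from $k$ to $k(\vx)$ is not proved by the Schur-lemma remark you give. Schur's lemma tells you $\mathrm{End}=k$ for a simple module over algebraically closed constants, but passing from that to semisimplicity of $M\otimes_{k(t)}K$ is not formal: $k(t)\langle\pa_t\rangle$ is infinite-dimensional over $k$, so the finite-dimensional-algebra argument does not apply verbatim, and the standard proofs of this fact for differential modules themselves go through either Picard--Vessiot/Tannakian machinery or precisely the kind of specialization $\bfx\mapsto\vc$ that the paper uses. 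You correctly flag this as the main obstacle, but you do not close it.

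Second, and more seriously, the maximality step has a logical hole. Your averaged operator $H^{*}=\lclm_{\vc}\,\sigma_{\vc}(H)$ is completely reducible \emph{as an operator in $K\langle\pa_t\rangle$} (it is an lclm of irreducibles over $K$) and has coefficients in $k(t)$; but the maximality of $G_i$ in the m.c.r.\ decomposition over $k(t)$ only says that $G_i$ is divisible on the right by every operator that is an lclm of irreducibles \emph{in $k(t)\langle\pa_t\rangle$}. To conclude $H^{*}\mid G_i$ you therefore need the descent statement ``completely reducible over $K$ with coefficients in $k(t)$ implies completely reducible over $k(t)$,'' which is a claim of the same depth as the ascent in your first step and is nowhere addressed. (Without it, $H^{*}$ could a priori be any right factor of $G_s\cdots G_i$ with coefficients in $k(t)$, and such factors need not divide $G_i$.) Filling either gap honestly seems to lead back to specializing $\bfx$ at a generic point of $k^m$, i.e.\ to the paper's argument; so either carry out that specialization explicitly, or cite a precise reference for both the ascent and descent of semisimplicity of differential modules under the constants extension $k\subset k(\vx)$. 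The translation-averaging construction of $H^{*}$ itself is fine and is a nice way to produce a $k(t)$-rational factor.
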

\begin{proof}
We first claim that an irreducible operator of $k(t)\langle \pa_t \rangle$ is irreducible in $K\langle \pa_t \rangle$. Let $P$ be a monic irreducible operator in $k(t)\langle \pa_t \rangle$ and assume that $Q$ is a monic right-hand factor of $P$ in $K\langle \pa_t \rangle$ with $1\leq \ord(Q)<\ord(P)$. Then $P=\tilde{Q}Q$ for some $\tilde{Q}\in K\langle \pa_t \rangle$. Suppose that $Q\in k(t)[\bfx,1/r]\langle \pa_t \rangle$. By Lemma~\ref{LM:samecoefficients}, $\tilde{Q}$ belongs to $k(t)[\bfx,1/r]\langle \pa_t \rangle$. Let $\vc\in k^m$ be such that $r(\vc)\neq 0$. Then $P=\tilde{Q}_{\bfx=\vc}Q_{\bfx=\vc}$ and $1\leq \ord(Q_{\bfx=\vc})\leq \ord(P)$. These imply that $P$ is reducible, a contradiction. So $P$ is irreducible and thus the claim holds.
Let $L=\lambda H_r H_{r-1}\dots H_1$ be the m.c.r. decomposition in $k(t)\langle \pa_t \rangle$. The above claim implies that $H_1$ viewed as an operator in $K\langle \pa_t \rangle$ is completely reducible. Assume that $H_1$ is not the maximal compleltely reducible right-hand factor of $L$ in $K\langle \pa_t \rangle$. Let $M\in K\langle \pa_t \rangle\setminus K$ be a monic irreducible right-hand factor of $L$ satisfying that $\gcrd(M,H_1)=1$. Due to Lemma~\ref{LM:gcrd}, there is $\va\in k^m$ satisfying that $\gcrd(M_{\bfx=\va},H_1)=1$. Note that $M_{\bfx=\va}$ is a right-hand factor of $L$. Therefore $M_{\bfx=\va}$ has some irreducible right-hand factor of $L$ as a right-hand factor. Such irreducible factor must be a right-hand factor of $H_1$ and thus $\gcrd(M_{\bfx=\va}, H_1)\neq 1$, a contradiction. Therefore $H_1$ is the maximal completely reducible right-hand factor of $L$ in $K\langle \pa_t \rangle$. Using the induction on the order, one sees that $\lambda H_r H_{r-1}\dots H_1$ is the m.c.r. decomposition of $L$ in $K\langle \pa_t \rangle$.
\end{proof}
\begin{lem}
\label{LM:similarity}
Assume that $P$ is monic, $(\bfx,t)$-separable and completely reducible. Assume further that $P\in k(t)[\bfx,1/r]\langle \pa_t \rangle$ with $r\in k[\bfx,t]$. Let $\vc\in k^m$ be such that $r(\vc)\neq 0$. Then $P_{\bfx=\bfc}$ is similar to $P$.
\end{lem}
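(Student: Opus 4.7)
The plan is to realize both $P$ and $P_{\bfx=\vc}$ as right-hand factors of a common completely reducible operator $L\in k(t)\langle\pa_t\rangle$, and then argue that the associated quotient $K\langle\pa_t\rangle$-modules $N_P:=K\langle\pa_t\rangle/K\langle\pa_t\rangle P$ and $N_{P_{\bfx=\vc}}:=K\langle\pa_t\rangle/K\langle\pa_t\rangle P_{\bfx=\vc}$ are isomorphic; any such isomorphism then furnishes the operator $Q$ witnessing similarity. Since $P$ is $(\bfx,t)$-separable, pick any nonzero $L'\in k(t)\langle\pa_t\rangle$ having $P$ as a right-hand factor, and take $L$ to be the maximal completely reducible right-hand factor of $L'$. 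By Lemma~\ref{LM:mcrdecomposition}, $L\in k(t)\langle\pa_t\rangle$, and because $P$ is completely reducible every irreducible right-hand factor of $P$ divides $L'$ (hence $L$) on the right, so $P\mid L$. Writing $L=\tilde L P$, Lemma~\ref{LM:samecoefficients} places $\tilde L\in k(t)[\bfx,1/r]\langle\pa_t\rangle$; specializing $\bfx\mapsto\vc$ yields $L=\tilde L_{\bfx=\vc}P_{\bfx=\vc}$, so $P_{\bfx=\vc}$ is likewise a right-hand factor of $L$.

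Decompose $L=\lclm(L_1,\ldots,L_s)$ as an lclm of pairwise non-similar irreducibles in $k(t)\langle\pa_t\rangle$; by Lemma~\ref{LM:mcrdecomposition} each $L_i$ remains irreducible over $K$, and $N_L:=K\langle\pa_t\rangle/K\langle\pa_t\rangle L$ is the semisimple module $\bigoplus_i N_{L_i}^{n_i}$ where $N_{L_i}:=K\langle\pa_t\rangle/K\langle\pa_t\rangle L_i$. Since $P$ and $P_{\bfx=\vc}$ are completely reducible right-hand factors of $L$, one gets $N_P\cong\bigoplus_i N_{L_i}^{m_i}$ and $N_{P_{\bfx=\vc}}\cong\bigoplus_i N_{L_i}^{m_i'}$, so the heart of the proof is to establish $m_i=m_i'$ for all $i$. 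The contrapositive of Lemma~\ref{LM:gcrd}(1) already gives $L_i\mid P\Rightarrow L_i\mid P_{\bfx=\vc}$, so $m_i>0\Rightarrow m_i'>0$; upgrading this to equality of multiplicities is the technical core. The idea is to view $\{P_{\bfx=\va}:\va\in k^m,\ r(\va)\neq 0\}$ as an algebraic family of right-hand factors of the fixed $L$, so that the submodules $K\langle\pa_t\rangle P_{\bfx=\va}/K\langle\pa_t\rangle L\subset N_L$ form a family of submodules of the fixed semisimple module $N_L$. Together with the order identity $\sum_i m_i\ord(L_i)=\ord(P)=\ord(P_{\bfx=\vc})=\sum_i m_i'\ord(L_i)$ and the semicontinuity of the dimensions $\dim_K\Hom(N_{L_i},N_{P_{\bfx=\va}})$, the one-sided bound from Lemma~\ref{LM:gcrd} is promoted to the equalities $m_i=m_i'$.

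Once $N_P\cong N_{P_{\bfx=\vc}}$ is in hand, any such $K\langle\pa_t\rangle$-isomorphism is given by an operator $Q\in K\langle\pa_t\rangle$ satisfying $P_{\bfx=\vc}Q\in K\langle\pa_t\rangle P$ (well-definedness of $[M]_P\mapsto[MQ]_{P_{\bfx=\vc}}$) and $\gcrd(P,Q)=1$ (injectivity, which forces bijectivity since the orders coincide). This is precisely the condition for $P_{\bfx=\vc}$ to be the transformation of $P$ by $Q$ with $\gcrd(P,Q)=1$, i.e.\ for $P_{\bfx=\vc}$ to be similar to $P$. The main obstacle is the multiplicity-matching step: Lemma~\ref{LM:gcrd} alone yields only one-sided divisibility, so a finer deformation/dimension argument along the family $\{P_{\bfx=\va}\}_\va$ is needed to force $m_i=m_i'$ for every~$i$ and every admissible $\vc$, not merely for generic $\va$; exploiting the completely reducible structure (and hence the absence of "Jordan block" jumps) is what makes this argument go through.
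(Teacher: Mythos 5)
Your opening moves match the paper's: take a nonzero $\tilde L\in k(t)\langle\pa_t\rangle$ with $P$ as a right-hand factor, replace it by its maximal completely reducible right-hand factor (Lemma~\ref{LM:mcrdecomposition} keeps this in $k(t)\langle\pa_t\rangle$, and Theorem~8 of Ore keeps $P$ as a right factor), and use Lemma~\ref{LM:samecoefficients} plus specialization to see that $P_{\bfx=\vc}$ also divides $\tilde L$ on the right. But from there your argument has a genuine gap at exactly the point you flag as ``the technical core.'' Knowing that $N_P$ and $N_{P_{\bfx=\vc}}$ are semisimple quotients of the same $N_{\tilde L}=\bigoplus_i N_{L_i}^{n_i}$ of equal $K$-dimension, together with the one-sided implication coming from Lemma~\ref{LM:gcrd}(1), does not pin down the multiplicities: with two non-similar simples of equal order one can have $(m_1,m_2)=(2,0)$ versus $(m_1',m_2')=(1,1)$ consistently with both constraints. (Also, $m_i>0$ only says some right factor of $P$ is \emph{similar} to $L_i$ over $K$, not that $L_i$ itself divides $P$, so even the one-sided bound needs more care than quoting the contrapositive of Lemma~\ref{LM:gcrd}(1).) The proposed fix --- semicontinuity of $\dim_K\Hom(N_{L_i},N_{P_{\bfx=\va}})$ along the family $\va$ --- is not proved, and it is not clear it would give the equality at the \emph{specific} point $\vc$ rather than generically; as stated, the proof does not close.

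The paper avoids multiplicity bookkeeping entirely by a complement argument: it proves (by induction on $\ord(\tilde L)-\ord(T)$, using complete reducibility of $\tilde L$) that any right-hand factor $T$ of $\tilde L$ admits a right-hand factor $L_T$ of $\tilde L$ with $\gcrd(T,L_T)=1$ and $\lclm(T,L_T)=\tilde L$. Applying this to $T=P_{\bfx=\vc}$ gives $L_\vc$ with $\tilde L=ML_\vc$, so $P_{\bfx=\vc}$ is similar to $M$ by $L_\vc$. Then Lemma~\ref{LM:gcrd}(1) gives $\gcrd(P,L_\vc)=1$ directly, and the order count $\ord(\lclm(P,L_\vc))=\ord(P)+\ord(L_\vc)=\ord(\tilde L)$ forces $\lclm(P,L_\vc)=\tilde L$, so $P$ is similar to the \emph{same} $M$ by the \emph{same} $L_\vc$; transitivity of similarity finishes. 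If you want to salvage your module-theoretic framing, the cleanest route is to import precisely this complement lemma (it amounts to the exchange property in a semisimple module) rather than attempting a deformation argument on the multiplicities.
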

\begin{proof}
 Let $\tilde{L}$ be a nonzero monic operator in $k(t)\langle \pa_t \rangle$ with $P$ as a right-hand factor. Since $P$ is completely reducible, by Theorem 8 of \cite{ore}, $P$ is a right-hand factor of the maximal completely reducible right-hand factor of $\tilde{L}$. By Lemma~\ref{LM:mcrdecomposition},  the maximal completely reducible right-hand factor of $\tilde{L}$ is in $k(t)\langle \pa_t \rangle$.  Hence we may assume that $\tilde{L}$ is completely reducible after replacing $\tilde{L}$ by its maximal completely reducible right-hand factor. Assume that $\tilde{L}=QP$ for some $Q\in K\langle \pa_t \rangle$. By Lemma~\ref{LM:samecoefficients}, $Q\in k(t)[\bfx,1/r]\langle \pa_t \rangle$. Then $\tilde{L}=Q_{\bfx=\bfc}P_{\bfx=\bfc}$, i.e. $P_{\bfx=\bfc}$ is a right-hand factor of $\tilde{L}$. We claim that for a right-hand factor $T$ of $\tilde{L}$, there is a right-hand factor $L$ of $\tilde{L}$ satisfying that $\gcrd(T,L)=1$ and $\lclm(T,L)=\tilde{L}$. We prove this claim by induction on $s=\ord(\tilde{L})-\ord(T)$. When $s=0$, there is nothing to prove. Assume that $s>0$. Then since $\tilde{L}$ is completely reducible, there is an irreducible right-hand factor $L_1$ of $\tilde{L}$ such that $\gcrd(T,L_1)=1$. Let $N=\lclm(T,L_1)$. We have that $\ord(N)=\ord(T)+\ord(L_1)$. Therefore $\ord(\tilde{L})-\ord(N)<s$. By induction hypothesis, there is a right-hand factor $L_2$ of $\tilde{L}$ such that $\gcrd(N,L_2)=1$ and $\lclm(N,L_2)=\tilde{L}$. Let $L=\lclm(L_1,L_2)$. Then
 $$
   \tilde{L}=\lclm(N,L_2)=\lclm(T, L_1,L_2)=\lclm(T,L).
 $$
Taking the order of the operators in the above equality yields that
 \begin{align*}
   \ord(\lclm(T,L))&=\ord(\lclm(N,L_2))=\ord(N)+\ord(L_2)\\
     &=\ord(T)+\ord(L_1)+\ord(L_2).
 \end{align*}
 On the other hand, we have
$$
   \ord(\lclm(T,L))\leq \ord(T)+\ord(L) \leq \ord(T)+\ord(L_1)+\ord(L_2).
$$
This implies that
$$
    \ord(\lclm(T,L))= \ord(T)+\ord(L).
$$
So $\gcrd(T,L)=1$ and then $L$ is a required operator. This proves the claim. Now let $L_{\vc}$ be a ritht-hand factor of $\tilde{L}$ satisfying that $\gcrd(P_{\bfx=\vc}, L_\vc)=1$ and $\lclm(P_{\bfx=\vc}, L_\vc)=\tilde{L}$. Let $M\in k(t)\langle \pa_t \rangle$ be such that $\tilde{L}=ML_\vc$. Then $P_{\bfx=\bfc}$ is similar to $M$. It remains to show that $P$ is also similar to $M$. Due to Lemma~\ref{LM:gcrd}, $\gcrd(P,L_\vc)=1$. Then
 $$\ord(\lclm(P,L_\vc))=\ord(P)+\ord(L_\vc)=\ord(P_{\bfx=\bfc})+\ord(L_\vc)=\ord(\tilde{L}).$$
 Note that $\lclm(P,L_\vc)$ is a right-hand factor of $\tilde{L}$. Hence $\lclm(P,L_\vc)=\tilde{L}$ and thus $P$ is similar to $M$.
\end{proof}

For the general case, the above lemma is not true anymore as shown in the following example.
\begin{example}
Let $y=x_1\log(t+1)+x_2\log(t-1)$ and
$$P=\pa_t^2+\frac{(t-1)^2x_1+(t+1)^2x_2}{(t^2-1)((t-1)x_1+(t+1)x_2)}\pa_t.$$
Then $P$ is $(x,t)$-separable since $\{1,y\}$ is a basis of the solution space of $P=0$ in $\calU$. We claim that $P$ is not similar to $P_{\bfx=\bfc}$ for any $\bfc\in k^2\setminus\{(0,0)\}$. Suppose on the contrary that $P$ is similar to $P_{\bfx=\bfc}$ for some $\bfc=(c_1,c_2)\in k^2\setminus\{(0,0)\}$, i.e. there are $a,b\in k(\bfx,t)$, not all zero, such that $\gcrd(a\pa_t+b, P_{\bfx=\bfc})=1$ and $P$ is the transformation of $P_{\bfx=\bfc}$ by $a\pa_t+b$. Denote $Q=a\pa_t+b$. As $\{1, y_{\bfx=\bfc}\}$ is a basis of the solution space of $P_{\bfx=\bfc}$, $\{Q(1), Q(y_{\bfx=\bfc})\}$ is a basis of the solution space of $P=0$. In other words, there is $C\in \GL_2(C_t)$ such that
$$
    \left(b, a\left(\frac{c_1}{t+1}+\frac{c_2}{t-1}\right)+by_{\bfx=\bfc}\right)=(1,y)C.
$$
Note that $\log(t+1),\log(t-1),1$ are linearly independent over $k(x_1,x_2,t)$. We have that $b\in C_t\setminus\{0\}$ and $bc_1=\tilde{c}x_1, bc_2=\tilde{c}x_2$ for some $\tilde{c}\in C_t$. This implies that $x_1/x_2=c_1/c_2\in k$, a contradiction.
\end{example}

When the given two operators are of length two, i.e. they are the products of two irreducible operators, a criterion for the similarity is presented in \cite{li-wang}. For the general case, suppose that $P$ is similar to $P_{\bfx=\vc}$ by $Q$. Then the operator $Q$ is a solution of the following mixed differential equation
\begin{equation}\label{EQ:mixedequation}
    Pz\equiv 0\mod P_{\bfx=\vc}.
\end{equation}
An algorithm for computing all solutions of the above mixed differential equation is developed in \cite{vanhoeij1}.  In the following, we shall show that if $P$ is $(\bfx,t)$-separable then $Q$ is an operator with semisplit coefficients. Note that $Q$ can be chosen to be of order less than $\ord(P_{\bfx=\vc})$ and all solutions of the mixed differential equation with order less than $\ord(P_{\bfx=\vc})$ form a vector space over $k(\bfx)$ of finite dimension. Furthermore $Q$ induces an isomorphism from the solution space of $P_{\bfx=\vc}(y)=0$ to that of $P(y)=0$.
\begin{prop}
\label{PROP:criterion}
Assume that $P$ is monic and completely reducible. Assume further that $P\in k(t)[\bfx,1/r]\langle \pa_t \rangle$ with $r\in k[\bfx,t]$. Let $\vc\in k^m$ be such that $r(\vc)\neq 0$. Then $P$ is $(\bfx,t)$-separable if and only if $P$ is similar to $P_{\bfx=\vc}$ by an operator $Q$ with semisplit coefficients.
\end{prop}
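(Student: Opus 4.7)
Plan. The strategy is to leverage Lemma~\ref{LM:solutions} (providing $C_\vx$-bases for solution spaces of operators in $k(t)\langle\pa_t\rangle$), Lemma~\ref{LM:similarity} (existence of some similarity in the completely reducible, $(\bfx,t)$-separable case), and a Cramer's-rule argument based on the nonsingular Wronskian of the $C_\vx$-basis of the solution space of $P_{\bfx=\vc}$.

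For the only-if direction, assume $P$ is completely reducible and $(\bfx,t)$-separable, so $P$ right-divides some nonzero $L_0 \in k(t)\langle\pa_t\rangle$. By Lemma~\ref{LM:solutions}, the solution space of $L_0$ admits a $C_t$-basis $v_1,\ldots,v_N$ in $C_\vx$; since the solution space of $P$ is contained in that of $L_0$, any $C_t$-basis $u_1,\ldots,u_d$ of $\{f\in\calU\mid P(f)=0\}$ can be written $u_j=\sum_\alpha c_{j\alpha}v_\alpha$ with $c_{j\alpha}\in C_t$. Lemma~\ref{LM:similarity} provides $Q_0\in K\langle\pa_t\rangle$ of order at most $d-1$ realizing the similarity $P\sim P_{\bfx=\vc}$; by Lemma~\ref{LM:solutions} again, pick a basis $h_1,\ldots,h_d\in C_\vx$ of the solution space of $P_{\bfx=\vc}$. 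The relation $\gcrd(P_{\bfx=\vc},Q_0)=1$ makes $Q_0$ a $C_t$-isomorphism of solution spaces, so $Q_0(h_i)=\sum_\alpha d_{i\alpha}v_\alpha$ for some $d_{i\alpha}\in C_t$. Writing $Q_0=\sum_l q_{0,l}\pa_t^l$, the system $\sum_l q_{0,l}\pa_t^l(h_i)=\sum_\alpha d_{i\alpha}v_\alpha$ has nonsingular coefficient matrix $H=(\pa_t^l h_i)\in\Mat_d(C_\vx)$ (the Wronskian of a $C_t$-basis), so Cramer's rule gives
\[
  q_{0,l}=\sum_{i,\alpha}d_{i\alpha}\cdot\bigl(v_\alpha(H^{-1})_{il}\bigr),
\]
which is a sum of products of $C_t$-elements ($d_{i\alpha}$) and $C_\vx$-elements ($v_\alpha(H^{-1})_{il}$); hence each $q_{0,l}$ is semisplit.

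For the if direction, assume $P\sim P_{\bfx=\vc}$ via $Q=\sum_l q_l\pa_t^l$ with each $q_l=\sum_m g_{lm}h_{lm}^*$ semisplit ($g_{lm}\in C_t$, $h_{lm}^*\in C_\vx$). Pick $h_1,\ldots,h_d\in C_\vx$ spanning the solutions of $P_{\bfx=\vc}$ over $C_t$; by the similarity, the solutions of $P$ are spanned over $C_t$ by $Q(h_1),\ldots,Q(h_d)$, and
\[
  Q(h_k)=\sum_{l,m}g_{lm}\bigl(h_{lm}^*\pa_t^l(h_k)\bigr)\in C_t\cdot W,
\]
where $W\subset C_\vx$ is the $k$-span of the finitely many products $h_{lm}^*\pa_t^l(h_k)$. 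Since each $q_l$ lies in $K=k(t,\bfx)$, the semisplit decomposition can be refined so that $h_{lm}^*\in k(t)$ and $g_{lm}\in k(\bfx)$ (the semisplit part of $K$ agrees with the image of $k(\bfx)\otimes_k k(t)\to K$). Each $h_{lm}^*\pa_t^l(h_k)$ is then $D$-finite over $k(t)$, being a $k(t)$-rational multiple of a derivative of $h_k$, which is annihilated by $P_{\bfx=\vc}\in k(t)\langle\pa_t\rangle$. By Lemma~\ref{LM:modifiedlipshitz}, a single nonzero $L\in k(t)\langle\pa_t\rangle$ annihilates every element of $W$; since the $g_{lm}\in C_t$ commute with $\pa_t$, this $L$ also annihilates each $Q(h_k)$ and hence the whole solution space of $P$. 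The standard correspondence between solution spaces and right factors then yields $P\mid L$ in $K\langle\pa_t\rangle$, so $P$ is $(\bfx,t)$-separable.

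The main technical obstacle is the refinement step, showing that a semisplit element of $K$ admits a representation with $C_t$-factors in $k(\bfx)$ and $C_\vx$-factors in $k(t)$; this is a descent-type fact on differential constants, amounting to $(C_t\cdot C_\vx)\cap K=k(\bfx)\otimes_k k(t)$, and it uses the standard linear independence results for $k$-linearly independent elements of $C_t$ (resp.\ $C_\vx$) over $C_\vx$ (resp.\ $C_t$). With this in hand, both directions reduce to the Cramer's-rule computation paired with the closure of $D$-finite functions.
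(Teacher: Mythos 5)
Your ``only if'' direction is essentially the paper's own argument: Lemma~\ref{LM:similarity} supplies the similarity, one takes a $C_\vx$-basis of the solution space of $P_{\bfx=\vc}$, notes that its images under $Q$ are semisplit because they solve an operator in $k(t)\langle\pa_t\rangle$ (the paper routes this through Corollary~\ref{COR:solutions}, you through Lemma~\ref{LM:solutions} applied to $L_0$ --- the same fact), and then solves for the coefficients of $Q$ by inverting the Wronskian matrix, whose inverse has entries in $C_\vx$. That half is fine and matches the paper.

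Your ``if'' direction takes a genuinely different route. The paper observes that each $Q(\alpha_j)=\sum_i a_i\pa_t^i(\alpha_j)$ is semisplit (each $a_i$ is semisplit and $\pa_t^i(\alpha_j)\in C_\vx$) and concludes via Corollary~\ref{COR:solutions}; you instead construct an explicit annihilator of the solution space of $P$ inside $k(t)\langle\pa_t\rangle$, which is in effect what steps~6--7 of Algorithm~\ref{ALG:completelyreducible} compute, and which has the merit of not needing the $C_\vx$-factors to be solutions of $P$ as Corollary~\ref{COR:solutions} formally requires. Two points, though. First, Lemma~\ref{LM:modifiedlipshitz} is the wrong citation for the common annihilator: it produces operators in $k\langle S\rangle$ with $|S|>n+1$, not in $k(t)\langle\pa_t\rangle$. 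What you need is elementary and different: if $w$ is annihilated by $L_w\in k(t)\langle\pa_t\rangle$ and $\beta\in k(t)\setminus\{0\}$, then $\beta w$ is annihilated by $L_w\frac{1}{\beta}\in k(t)\langle\pa_t\rangle$, and one takes the $\lclm$ of finitely many such operators. Second, and more seriously, your argument really does hinge on the ``refinement'' step --- that a semisplit element of $K$ admits a semisplit decomposition with $C_t$-factors in $k(\bfx)$ and $C_\vx$-factors in $k(t)$ --- and you leave it unproved. It is not a cosmetic point: if $h^*_{lm}$ is merely some element of $C_\vx\subset\calU$ (e.g.\ a solution of $y'=e^ty$ killed by all $\pa_{x_j}$), then $h^*_{lm}\pa_t^l(h_k)$ need not be annihilated by anything in $k(t)\langle\pa_t\rangle$, and your construction of $L$ collapses. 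The descent fact is true, and the paper itself uses it without proof in the discussion preceding Algorithm~\ref{ALG:completelyreducible} (writing semisplit $b_i$ as $\sum_j h_{i,j}\beta_j$ with $h_{i,j}\in k(\bfx)$, $\beta_j\in k(t)$), but as your proof stands this is a genuine gap: you should either prove the descent claim (minimal-length decompositions plus linear disjointness of $C_t$ and $C_\vx$ over their intersection, combined with $f\in k(\bfx,t)$) or avoid it by arguing as the paper does.
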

\begin{proof}
Denote $n=\ord(P_{\bfx=\vc})=\ord(P)$.
Assume that $\{\alpha_1,\cdots,\alpha_n\}$ is a basis of the solution space of $P_{\bfx=\bfc}(y)=0$ in $C_\bfx$ and $P$ is similar to $P_{\bfx=\vc}$ by $Q$. Write $Q=\sum_{i=0}^{n-1} a_i \pa_t^i$ where $a_i\in K$. Then
$$
\left(Q(\alpha_1),\dots,Q(\alpha_n)\right)=(a_0,\dots,a_{n-1})\begin{pmatrix}\alpha_1 & \alpha_2 & \dots & \alpha_n \\
\alpha_1' & \alpha_2' & \dots & \alpha_n' \\
\vdots & \vdots & & \vdots\\
\alpha_1^{(n-1)} & \alpha_2^{(n-1)} & \dots & \alpha_n^{(n-1)}
\end{pmatrix}
$$
and $Q(\alpha_1),\dots,Q(\alpha_n)$ form a basis of the solution space of $P(y)=0$.

Now suppose that $P$ is $(\bfx,t)$-separable. Due to Lemma~\ref{LM:similarity}, $P$ is similar to $P_{\bfx=\vc}$ by $Q$.
By Corollary~\ref{COR:solutions}, the $Q(\alpha_i)$ are semisplit. The above equalities then imply that the $a_i$ are semisplit. Conversely, assume that $P$ is similar to $P_{\bfx=\vc}$ by $Q$ and the $a_i$ are semisplit.
It is easy to see the $Q(\alpha_i)$ are semisplit. By Corollary~\ref{COR:solutions} again, $P$ is $(\bfx,t)$-separable.
\end{proof}

Using the algorithm developed in \cite{vanhoeij1}, we can compute a basis of the solution space over $k(\bfx)$ of the equation (\ref{EQ:mixedequation}). It is clear that the solutions with semisplit entries form a subspace. We can compute a basis for this subspace as follows. Suppose that $\{Q_1,\dots,Q_\ell\}$ is a basis of the solution space of the equation (\ref{EQ:mixedequation}) consisting of solutions with order less than $\ord(P_{\bfx=\vc})$. We may identity $Q_i$ with a vector $\vg_i\in K^n$ under the basis $1,\pa_t,\dots,\pa_t^{n-1}$. Let $q\in k(\bfx)[t]$ be a common denominator of all entries of the $\vg_i$. Write $\vg_i=\vp_i/q$ for each $i=1,\dots,\ell$, where $\vp_i\in k(\bfx)[t]^n$. Write $q=q_1 q_2$ where $q_2$ is split but $q_1$ is not. Note that a rational function in $t$ with coefficients in $k(\bfx)$ is semisplit if and only if its denominator is split. For $c_1,\dots, c_\ell\in k(\bfx)$, $\sum_{i=1}^\ell c_i \vg_i$ is semisplit if and only if all entries of $\sum_{i=1}^\ell c_i \vp_i$ are divided by $q_1$. For $i=1,\dots, \ell$, let $\vh_i$ be the vector whose entries are the remainders of the corresponding entries of $\vp_i$ by $q_1$. Then all entries of $\sum_{i=1}^\ell c_i \vp_i$ are divided by $q_1$ if and oly if $\sum_{i=1}^\ell c_i\vh_i=0$. Let $\vc_1,\dots,\vc_s$ be a basis of the solution space of $\sum_{i=1}^\ell z_i \vh_i=0$. Then $\{(Q_1,\dots,Q_\ell) \vc_i\mid i=1,\dots,s\}$ is the required basis. Consequently, the required basis can be computed by solving the system of linear equations $\sum_{i=1}^\ell z_i\vh_i=0$.

In the following, for the sake of notations, we assume that $\{Q_1,\dots,Q_\ell\}$ is a basis of the solution space of the equation (\ref{EQ:mixedequation}) consisting of solutions with semisplit coefficients. By Proposition~\ref{PROP:criterion} and the definition of similarity, $P$ is $(\bfx,t)$-separable if and only if there is a nonzero $\tilde{Q}$ in the space spanned by $Q_1,\dots,Q_\ell$ such that $\gcrd(P_{\bfx=\vc}, \tilde{Q})=1$. Note that $\tilde{Q}$ induces a homomorphism from the solutions space of $P_{\bfx=\vc}(y)=0$ to that of $P(y)=0$. Moreover, one can easily see that $\gcrd(P_{\bfx=\vc},\tilde{Q})=1$ if and only if $\tilde{Q}$ is an isomorphism i.e. $\tilde{Q}(\alpha_1),\dots,\tilde{Q}(\alpha_n)$ form a basis of the solution space of $P(y)=0$ where $\{\alpha_1,\dots,\alpha_n\}$ is a basis of the solution space of $P_{\bfx=\vc}(y)=0$. Assume that $\tilde{Q}=\sum_{i=0}^{n-1} a_{0,i} \pa_t^i$ with $a_{0,i}\in K$. Using the relation $P_{\bfx=\vc}(\alpha_j)=0$ with $j=1,\dots,n$, one has that for all $j=1,\dots,n$
$$
  \tilde{Q}(\alpha_j)'=\left(\sum_{i=0}^{n-1} a_{0,i} \alpha_j^{(i)}\right)'=\sum_{i=0}^{n-1} a_{1,i} \alpha_j^{(i)}
$$
for some $a_{1,i}\in K$. Repeating this process, we  can compute $a_{l,i}\in K$ such that for all $j=1,\dots,n$ and $l=1,\dots, n-1$,
$$
   \tilde{Q}(\alpha_j)^{(l)}=\sum_{i=0}^{n-1} a_{l,i} \alpha_j^{(i)}.
$$
Now suppose that $\tilde{Q}=\sum_{i=1}^\ell z_i Q_i$  with $z_i\in k(\bfx)$. One sees that the $a_{l,i}$ are linear in $z_1,\dots,z_\ell$. Set $A(\vz)=(a_{i,j})_{0\leq i,j\leq n-1}$ with $\vz=(z_1,\dots,z_\ell)$. Then one has that
\begin{equation}
\label{EQ:transformation}
A(\vz)\begin{pmatrix}
   \alpha_1 & \dots & \alpha_n \\
    \vdots & & \vdots \\
   \alpha^{(n-1)}& \dots & \alpha_n^{(n-1)}
 \end{pmatrix}=\begin{pmatrix}
                 \tilde{Q}(\alpha_1) & \dots & \tilde{Q}(\alpha_n) \\
                 \vdots &  & \vdots \\
                 \tilde{Q}(\alpha_1)^{(n-1)} & \dots & \tilde{Q}(\alpha_n)^{(n-1)}
               \end{pmatrix}.
\end{equation}
It is well-known that $\tilde{Q}(\alpha_1),\dots,\tilde{Q}(\alpha_n)$ form a basis if and only if the right-hand side of the above equality is a nonsingular matrix and thus if and only if $A(\vz)$ is nonsingular. In the sequel, one can reduce the problem of the existence of $\tilde{Q}$ satisfying $\gcrd(\tilde{Q},P_{\bfx=\vc})=1$ to the problem of the existence of $\va\in k(\bfx)^\ell$ in $k(\bfx)$ such that $\det(A(\va))\neq 0$.

Suppose now we have already had an operator $Q$ with semisplit coefficients such that $P$ is similar to $P_{\bfx=\vc}$ by $Q$. Write $Q=\sum_{i=0}^{n-1} b_i \pa_t^i$ where $b_i\in K$ is semisplit. Write further $b_i=\sum_{j=1}^s h_{i,j}\beta_j$ where $h_{i,j}\in k(\bfx)$ and $\beta_j\in k(t)\setminus \{0\}$. Let $L_0=P_{\bfx=\bfc}$ and let $L_i$ be the transformation of $L_{i-1}$ by $\pa_t$ for $i=1,\cdots,n-1$. Then $L_i$ annihilates $\alpha_j^{(i)}$ for all $j=1,\cdots,n$ and $L_i \frac{1}{\beta_l}$ annihilates $\beta_l \alpha_j^{(i)}$ for all $l=1,\dots,s$ and $j=1,\dots,n$. Set
$$L=\lclm\left(\left\{L_i \frac{1}{\beta_l}\mid i=0,\dots,n-1, l=1,\dots,s\right\}\right).$$
Then $L$ annihilates all $\tilde{Q}(\alpha_i)$ and thus has $P$ as a right-hand factor. We summarize the previous discussion as the following algorithm.
\begin{algorithm}
\label{ALG:completelyreducible}
Input: $P\in K\langle \pa_t \rangle$ that is monic and completely reducible.\\[2mm]
Output:  a nonzero $L\in k(t)\langle \pa_t \rangle$ which is divided by $P$ on the right if it exists, otherwise 0.
\begin{enumerate}
\item Write
$$P=\pa_t^n + \sum_{i=0}^{n-1}\frac{a_i}{r}\pa_t^i$$
where $a_i\in k(t)[\bfx], r\in k[\bfx,t]$.
\item Pick $\bfc\in k^m$ such that $r(\bfc)\neq 0$.  By the algorithm in \cite{vanhoeij1}, compute a basis of the solution space $V$ of the equation (\ref{EQ:mixedequation}).
\item Compute a basis of the subspace of $V$ consisting of operators with semisplit coefficients, say $Q_1,\cdots,Q_\ell$.
\item Set $\tilde{Q}=\sum_{i=1}^\ell z_i Q_i$ and using $\tilde{Q}$, compute the matrix $A(\vz)$ as in (\ref{EQ:transformation}).
\item If $\det(A(\vz))=0$ then return 0 and the algorithm terminates. Otherwise compute $\va=(a_1,\dots,a_\ell)\in k^\ell$ such that $\det(A(\va))\neq 0$.
\item Set $b_i$ to be the coefficient of $\pa_t^i$ in $\sum_{j=1}^\ell a_j Q_j$ and write $b_i=\sum_{j=1}^s h_{i,j} \beta_j$ where $h_{i,j}\in k(\bfx)$ and $\beta_j\in k(t)$. Let $L_0=P_{\bfx=\bfc}$ and for each $i=1,\cdots,n-1$ compute $L_i$, the transformation of $L_{i-1}$ by $\pa_t$.
\item Return $\lclm\left(\left\{ L_i \frac{1}{\beta_j} \mid i=0,\dots,n-1, j=1,\dots,s\right\}\right)$.
\end{enumerate}
\end{algorithm}

\subsection{The general case}
Assume that $P$ is $(\bfx,t)$-separable and $P=Q_1Q_2$ where $Q_1,Q_2\in K\langle \pa_t \rangle$. It is clear that $Q_2$ is also $(\bfx,t)$-separable. One may wonder whether $Q_1$ is also $(\bfx,t)$-separable. The following example shows that $Q_1$ may not be $(\bfx,t)$-separable.
\begin{example} Let $K=k(x,t)$ and let $P=\pa_t^2$. Then $P$ is $(\bfx,t)$-separable and
$$\pa_t^2=\left(\pa_t+\frac{x}{xt+1}\right)\left(\pa_t-\frac{x}{xt+1}\right).$$
The operator $\pa_t+x/(xt+1)$ is not $(\bfx,t)$-separable, because $1/(xt+1)$ is one of its solutions and it is not semisplit.
\end{example}
While, the lemma below shows that if $Q_2$ is semisplit then $Q_1$ is also $(\bfx,t)$-separable.
\begin{lem}
\label{LM:composition}
\begin{itemize}
\item [$(1)$]
Assume that $Q_1,Q_2\in K\langle \pa_t \rangle\setminus\{0\}$, and $Q_2$ is semisplit. Then $Q_1Q_2$ is $(\bfx,t)$-separable if and only if both $Q_1$ and $Q_2$ are $(\bfx,t)$-separable.
\item [$(2)$]
  Assume that $P\in K\langle \pa_t\rangle\setminus\{0\}$ and $L$ is a nonzero monic operator in $k(t)\langle \pa_t \rangle$. Then $P$ is $(\bfx,t)$-separable if and only if so is the transformation of $P$ by $L$.
\end{itemize}
\end{lem}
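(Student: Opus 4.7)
My plan is to handle Part~(2) first, since its proof is purely operator-theoretic and will serve as a lever for Part~(1).

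For Part~(2), I will work from the defining identity of the transformation, $\tilde P L = \lambda\,\lclm(P,L)$ (with $\lambda$ the scalar making $\tilde P$ monic). For the $(\Rightarrow)$ direction, given a witness $\Lambda = UP$ in $k(t)\langle\pa_t\rangle\setminus\{0\}$, I form $N:=\lclm(\Lambda,L)$, which lies in $k(t)\langle\pa_t\rangle$ since both arguments do. Because $L$ is a right factor of $N$, right-division produces $N = FL$ with $F\in k(t)\langle\pa_t\rangle$ (right division of a $k(t)$-coefficient operator by a $k(t)$-coefficient operator in $K\langle\pa_t\rangle$ stays in $k(t)\langle\pa_t\rangle$, by uniqueness of the quotient and remainder). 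Since $\lclm(P,L) = \tilde P L$ also right-divides $N$, I get $N = G\tilde P L$ for some $G$, hence $F = G\tilde P$, so $F$ witnesses that $\tilde P$ is $(\bfx,t)$-separable. The $(\Leftarrow)$ direction is the mirror construction: from a witness $\tilde\Lambda = V\tilde P\in k(t)\langle\pa_t\rangle$, the operator $\tilde\Lambda L = V\lclm(P,L)$ is in $k(t)\langle\pa_t\rangle$ and has $P$ as a right factor.

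For Part~(1), the easy half is that if $\Lambda = U(Q_1Q_2)\in k(t)\langle\pa_t\rangle$ witnesses $P$, rewriting $\Lambda = (UQ_1)Q_2$ immediately witnesses $Q_2$. The substantive step is showing $Q_1$ is $(\bfx,t)$-separable. My plan is: normalize so $\Lambda,P,Q_1,Q_2$ are monic, so $UQ_1$ is monic as well; then $\Lambda$ is trivially semisplit (coefficients in $k(t)$) and $Q_2$ is semisplit by hypothesis, so Corollary~\ref{cor:semisplitoperators} gives that $UQ_1$ is semisplit. I would then use the canonical decomposition $UQ_1=\sum_{j=1}^s g_j R_j$ in $k(\bfx)\otimes_k k(t)\langle\pa_t\rangle$ with $g_j\in k(\bfx)$ linearly independent over $k$ and $R_j\in k(t)\langle\pa_t\rangle$, and argue that each $R_j$ is a left multiple of $Q_1$ in $K\langle\pa_t\rangle$. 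Any nonzero $R_j$ then serves as the required $k(t)$-coefficient witness for $Q_1$. To get the left-multiple claim, I would reduce each $R_j$ modulo $Q_1$ on the right, obtaining $R_j = U_j Q_1 + V_j$ with $\ord(V_j)<\ord(Q_1)$, plug back in to get $\sum_j g_j V_j = 0$, and then apply the relation to $Q_1$-solutions lying in $C_\bfx$ (together with the linear disjointness of $k(\bfx)$ and $C_\bfx$ over $k$ inside $\calU$) to force each $V_j$ to vanish.

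For the $(\Leftarrow)$ direction of Part~(1), I will invoke Part~(2) already established. Let $L_2 = N_2 Q_2\in k(t)\langle\pa_t\rangle$ witness $Q_2$; by Part~(2), $Q_1Q_2$ is $(\bfx,t)$-separable iff its transformation by $L_2$ is. A direct computation, using the identity $\lclm(AB,CB) = \lclm(A,C)\,B$ for operators sharing $B$ as a common right factor, shows this transformation equals the transformation of $Q_1$ by $N_2$. Applying Corollary~\ref{cor:semisplitoperators} to $L_2 = N_2 Q_2$ gives that $N_2$ is semisplit, so the same decomposition machinery used above (applied symmetrically to the transformation of $Q_1$ by $N_2$, whose witnesses can be built from a witness for $Q_1$ together with the $k(t)$-components of $N_2$) produces the required witness in $k(t)\langle\pa_t\rangle$.

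I expect the main obstacle to be the $Q_1$-half of the $(\Rightarrow)$ direction of Part~(1): translating the semisplit-ness of $UQ_1$ into the statement that at least one of the components $R_j$ is annihilated from the right by $Q_1$. All the other steps are essentially manipulations of $\lclm$ and division in $K\langle\pa_t\rangle$; this step is where the interaction between the tensor-product structure of semisplit operators and the right-ideal generated by $Q_1$ has to be exploited, and is the only place where the solution-level content of $(\bfx,t)$-separability (via the linear-disjointness argument inside $\calU$) appears to be unavoidable.
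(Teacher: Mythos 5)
Your Part~(2) is correct and self-contained (indeed more direct than the paper, which deduces~(2) from~(1)), and the $Q_2$-half of Part~(1) is fine. The gap is exactly where you flagged it: the claim that each component $R_j$ in the decomposition $UQ_1=\sum_j g_jR_j$ is a left multiple of $Q_1$. Your order argument does yield $\sum_j g_jV_j=0$ for the right remainders $V_j$ of the $R_j$ upon division by $Q_1$, but these $V_j$ lie in $K\langle\pa_t\rangle$, not in $k(t)\langle\pa_t\rangle$, so linear independence of the $g_j$ over $k$ does not force them to vanish, and the claim is in fact false at the level of generality at which you invoke it. For instance, $Q_1=\pa_t-x/t$ is a monic, semisplit left multiple of itself with decomposition $1\cdot(\pa_t)+x\cdot(-1/t)$, yet neither $\pa_t$ nor $-1/t$ is a left multiple of $Q_1$ --- and this $Q_1$ is not even $(\bfx,t)$-separable. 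So the semisplitness of $UQ_1$ alone cannot carry the argument; the hypothesis on $Q_2$ must enter elsewhere.

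Your proposed rescue --- ``apply the relation to $Q_1$-solutions lying in $C_\vx$'' --- presupposes that the solution space of $Q_1=0$ is spanned by elements of $C_\vx$, which is not available at that point: by Corollary~\ref{COR:solutions} it is essentially the statement you are trying to prove. The paper closes this gap by a purely solution-level argument that bypasses the operator decomposition entirely: every solution $g$ of $Q_1=0$ equals $Q_2(f)$ for some solution $f$ of $Q_1Q_2=0$ (solve $Q_2(y)=g$ in the universal extension $\calU$); $f$ is semisplit by Corollary~\ref{COR:solutions} applied to $Q_1Q_2$; a semisplit operator maps such elements to elements of the same form; hence $g$ is semisplit and Corollary~\ref{COR:solutions} applies in reverse to $Q_1$. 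Once that fact is in hand your $R_j$'s are indeed left multiples of $Q_1$ (by linear disjointness of $k(\bfx)$ and $C_\vx$ over $k$ in $\calU$), but the decomposition detour is then redundant. Note also that your sketch of the converse direction of~(1) leans on ``the same decomposition machinery'' and so inherits the same gap; the paper's route there is to write $\tilde QQ_2=L\in k(t)\langle\pa_t\rangle$, conclude via Corollary~\ref{cor:semisplitoperators} and the forward direction that $\tilde Q$ is semisplit and separable, and then apply the forward direction once more to $\lclm(Q_1,\tilde Q)=N\tilde Q$ to extract a $k(t)\langle\pa_t\rangle$-witness.
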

\begin{proof}
Note that the solution space of $\lclm(P_1,P_2)=0$ is spanned by those of $P_1=0$ and $P_2=0$. Hence $\lclm(P_1,P_2)$ is $(\bfx,t)$-separable if and only if so are both $P_1$ and $P_2$.

$(1)$ For the ``only if" part, one only need to prove that $Q_1$ is $(\bfx,t)$-separable.
Assume that $g$ is a solution of $Q_1=0$ in $\calU$. Let $f$ be a solution of $Q_2(y)=g$ in $\calU$. Such $f$ exists because $\calU$ is the universal differential extension of $K$. Then $f$ is a solution of $Q_1Q_2=0$ in $\calU$. By Corollary~\ref{COR:solutions}, $f$ is semisplit. Since $Q_2$ is semisplit, one sees that $g=Q_2(f)$ is semisplit. By Corollary~\ref{COR:solutions} again, $Q_1$ is $(\bfx,t)$-separable.

Now assume that both $Q_1$ and $Q_2$ are $(\bfx,t)$-separable. Let $\tilde{Q}\in K\langle \pa_t \rangle$ be such that $\tilde{Q}Q_2=L$ where $L\in k(t)\langle \pa_t \rangle$ is monic. By Corollary~\ref{cor:semisplitoperators} and the ``only if" part, $\tilde{Q}$ is semisplit and $(\bfx,t)$-separable. Thus $\lclm(Q_1,\tilde{Q})$ is $(\bfx,t)$-separable. Assume that $\lclm(Q_1,\tilde{Q})=N\tilde{Q}$ with $N\in K\langle \pa_t \rangle$. Since $\tilde{Q}$ is semisplit, by the ``only if" part again, $N$ is $(\bfx,t)$-separable. Let $M\in K\langle \pa_t \rangle$ be such that $MN$ is a nonzero operator in $k(t)\langle \pa_t \rangle$. We have that
$$
    M\lclm(Q_1,\tilde{Q})Q_2=MN\tilde{Q}Q_2=MNL\in k(t)\langle \pa_t \rangle.
$$
On the other hand, $M\lclm(Q_1,\tilde{Q})Q_2=M\tilde{M}Q_1Q_2$ for some $\tilde{M}\in K\langle \pa_t \rangle$. Hence $P=Q_1Q_2$ is $(\bfx,t)$-separable.

$(2)$ Since $L$ is $(\bfx,t)$-separable, we have that $P$ is $(\bfx,t)$-separable if and only if so is $\lclm(P,L)$. Let $\tilde{P}$ be the transformation of $P$ by $L$. Then $\tilde{P}L=\lclm(P,L)$. As $L$ is semisplit, the assertion then follows from $(1)$.
\end{proof}

Assume that $P$ is a nonzero operator in $K\langle \pa_t\rangle$. Let $P_0$ be an irreducible right-hand factor of $P$. By Algorithm~\ref{ALG:completelyreducible}, we can decide whether $P_0$ is $(\bfx,t)$-separable or not. Now assume that $P_0$ is $(\bfx,t)$-separable. Then we can compute a nonzero monic operator $L_0\in k(t)\langle \pa_t \rangle$ having $P_0$ as a right-hand factor. Let $P_1$ be the transformation of $P$ by $L_0$. Lemma~\ref{LM:composition} implies that $P$ is $(\bfx,t)$-separable if and only if so is $P_1$. Note that
\begin{align*}
 \ord(P_1)&=\ord(\lclm(P,L_0))-\ord(L_0)\\
 &\leq \ord(P)+\ord(L_0)-\ord(P_0)-\ord(L_0)=\ord(P)-\ord(P_0).
\end{align*}
In other words, $\ord(P_1)<\ord(P)$. Replacing $P$ by $P_1$ and repeating the above process yield an algorithm to decide whether $P$ is $(\bfx,t)$-separable.
\begin{algorithm}
\label{ALG:generalcase}
Input: a nonzeor monic $P\in K\langle \pa_t \rangle$.\\[2mm]
Output:  a nonzero $L\in k(t)\langle \pa_t \rangle$ which is divided by $P$ on the right if it exists, otherwise 0.
\begin{enumerate}
\item If $P=1$ then return 1 and the algorithm terminates.
\item Compute an irreducible right-hand factor $P_0$  of $P$ by algorithms developed in \cite{beke,vanderput-singer,vanhoeij2}.
\item Apply Algorithm~\ref{ALG:completelyreducible} to $P_0$ and let $L_0$ be the output.
\item If $L_0=0$ then return 0 and the algorithm terminates. Otherwise compute the transformation of $P$ by $L_0$, denoted by $P_1$.
\item Apply Algorithm~\ref{ALG:generalcase} to $P_1$ and let $L_1$ be the output.
\item Return $L_1L_0$.
\end{enumerate}
\end{algorithm}
The termination of the algorithm is obvious. Assume that $L_1\neq 0$. Then $L_1=Q_1P_1$ for some $Q_1\in K\langle \pa_t \rangle$. We have that
$P_1L_0=\lclm(P,L_0)$. Therefore $$L_1L_0=Q_1P_1L_0=Q_1\lclm(P,L_0)=Q_1Q_0P$$
for some $Q_0\in K\langle \pa_t \rangle$. This proves the correctness of the algorithm.

\bibliographystyle{abbrv}
\bibliography{parallel-v8}

\begin{thebibliography}{10}

\bibitem{abramov2}
S.~A. Abramov.
\newblock When does {Z}eilberger's algorithm succeed?
\newblock {\em Adv. in Appl. Math.}, 30(3):424--441, 2003.

\bibitem{abramov-le}
S.~A. Abramov and H.~Q. Le.
\newblock A criterion for the applicability of {Z}eilberger's algorithm to
  rational functions.
\newblock {\em Discrete Math.}, 259(1-3):1--17, 2002.

\bibitem{almkvist-zeilberger}
G.~Almkvist and D.~Zeilberger.
\newblock The method of differentiating under the integral sign.
\newblock {\em J. Symbolic Comput.}, 10(6):571--591, 1990.

\bibitem{beke}
E.~Beke.
\newblock Die {I}rreducibilit\"{a}t der homogenen linearen
  {D}ifferentialgleichungen.
\newblock {\em Math. Ann.}, 45(2):278--294, 1894.

\bibitem{bjork}
J.-E. Bj\"{o}rk.
\newblock {\em Rings of differential operators}, volume~21 of {\em
  North-Holland Mathematical Library}.
\newblock North-Holland Publishing Co., Amsterdam-New York, 1979.

\bibitem{bostan-chen-chyzak}
A.~Bostan, S.~Chen, F.~Chyzak, and Z.~Li.
\newblock Complexity of creative telescoping for bivariate rational functions.
\newblock In {\em I{SSAC} 2010---{P}roceedings of the 2010 {I}nternational
  {S}ymposium on {S}ymbolic and {A}lgebraic {C}omputation}, pages 203--210.
  ACM, New York, 2010.

\bibitem{bostan-chen-chyzak-li-xin}
A.~Bostan, S.~Chen, F.~Chyzak, Z.~Li, and G.~Xin.
\newblock Hermite reduction and creative telescoping for hyperexponential
  functions.
\newblock In {\em I{SSAC} 2013---{P}roceedings of the 38th {I}nternational
  {S}ymposium on {S}ymbolic and {A}lgebraic {C}omputation}, pages 77--84. ACM,
  New York, 2013.

\bibitem{bostan-chyzak-lairez-salvy}
A.~Bostan, F.~Chyzak, P.~Lairez, and B.~Salvy.
\newblock Generalized {H}ermite reduction, creative telescoping and definite
  integration of {D}-finite functions.
\newblock In {\em I{SSAC}'18---{P}roceedings of the 2018 {ACM} {I}nternational
  {S}ymposium on {S}ymbolic and {A}lgebraic {C}omputation}, pages 95--102. ACM,
  New York, 2018.

\bibitem{bostan-lairez-salvy}
A.~Bostan, P.~Lairez, and B.~Salvy.
\newblock Creative telescoping for rational functions using the
  {G}riffiths-{D}work method.
\newblock In {\em I{SSAC} 2013---{P}roceedings of the 38th {I}nternational
  {S}ymposium on {S}ymbolic and {A}lgebraic {C}omputation}, pages 93--100. ACM,
  New York, 2013.

\bibitem{chen-chyzak-feng-fu-li}
S.~Chen, F.~Chyzak, R.~Feng, G.~Fu, and Z.~Li.
\newblock On the existence of telescopers for mixed hypergeometric terms.
\newblock {\em J. Symbolic Comput.}, 68(part 1):1--26, 2015.

\bibitem{chen-feng-li-singer}
S.~Chen, R.~Feng, Z.~Li, and M.~F. Singer.
\newblock Parallel telescoping and parameterized {P}icard-{V}essiot theory.
\newblock In {\em I{SSAC} 2014---{P}roceedings of the 39th {I}nternational
  {S}ymposium on {S}ymbolic and {A}lgebraic {C}omputation}, pages 99--106. ACM,
  New York, 2014.

\bibitem{chen-hou-labahn-wang}
S.~Chen, Q.-H. Hou, G.~Labahn, and R.-H. Wang.
\newblock Existence problem of telescopers: beyond the bivariate case.
\newblock In {\em Proceedings of the 2016 {ACM} {I}nternational {S}ymposium on
  {S}ymbolic and {A}lgebraic {C}omputation}, pages 167--174. ACM, New York,
  2016.

\bibitem{chen-kauers}
S.~Chen and M.~Kauers.
\newblock Some open problems related to creative telescoping.
\newblock {\em J. Syst. Sci. Complex.}, 30(1):154--172, 2017.

\bibitem{chen-kauers-koutschan}
S.~Chen, M.~Kauers, and C.~Koutschan.
\newblock Reduction-based creative telescoping for algebraic functions.
\newblock In {\em Proceedings of the 2016 {ACM} {I}nternational {S}ymposium on
  {S}ymbolic and {A}lgebraic {C}omputation}, pages 175--182. ACM, New York,
  2016.

\bibitem{chen-vanhoeij-kauers-koutschan}
S.~Chen, M.~van Hoeij, M.~Kauers, and C.~Koutschan.
\newblock Reduction-based creative telescoping for fuchsian {D}-finite
  functions.
\newblock {\em J. Symbolic Comput.}, 85:108--127, 2018.

\bibitem{chen-hou-mu}
W.~Y.~C. Chen, Q.-H. Hou, and Y.-P. Mu.
\newblock Applicability of the {$q$}-analogue of {Z}eilberger's algorithm.
\newblock {\em J. Symbolic Comput.}, 39(2):155--170, 2005.

\bibitem{euler}
L.~Euler.
\newblock Specimen de constructione aequationum differentialium sine
  indeterminatarum separatione.
\newblock {\em Commentarii academiae scientiarum Petropolitanae}, 6:168--174,
  1733.

\bibitem{kolchin}
E.~R. Kolchin.
\newblock {\em Differential algebra and algebraic groups}.
\newblock Academic Press, New York-London, 1973.
\newblock Pure and Applied Mathematics, Vol. 54.

\bibitem{koutschan}
C.~Koutschan.
\newblock Creative telescoping for holonomic functions.
\newblock In {\em Computer algebra in quantum field theory}, Texts Monogr.
  Symbol. Comput., pages 171--194. Springer, Vienna, 2013.

\bibitem{lairez}
P.~Lairez.
\newblock Computing periods of rational integrals.
\newblock {\em Math. Comp.}, 85(300):1719--1752, 2016.

\bibitem{lang}
S.~Lang.
\newblock {\em Algebra}, volume 211 of {\em Graduate Texts in Mathematics}.
\newblock Springer-Verlag, New York, third edition, 2002.

\bibitem{li-lian-yau}
S.~Li, B.~H. Lian, and S.-T. Yau.
\newblock Picard-{F}uchs equations for relative periods and {A}bel-{J}acobi map
  for {C}alabi-{Y}au hypersurfaces.
\newblock {\em Amer. J. Math.}, 134(5):1345--1384, 2012.

\bibitem{li-wang}
Z.~Li and H.~Wang.
\newblock A criterion for the similarity of length-two elements in a
  noncommutative {PID}.
\newblock {\em J. Syst. Sci. Complex.}, 24(3):580--592, 2011.

\bibitem{lipshitz}
L.~Lipshitz.
\newblock The diagonal of a {$D$}-finite power series is {$D$}-finite.
\newblock {\em J. Algebra}, 113(2):373--378, 1988.

\bibitem{morrison-walcher}
D.~R. Morrison and J.~Walcher.
\newblock D-branes and normal functions.
\newblock {\em Adv. Theor. Math. Phys.}, 13(2):553--598, 2009.

\bibitem{muller-weinzierl-zayadeh}
S.~M\"{u}ller-Stach, S.~Weinzierl, and R.~Zayadeh.
\newblock Picard-{F}uchs equations for {F}eynman integrals.
\newblock {\em Comm. Math. Phys.}, 326(1):237--249, 2014.

\bibitem{ore}
O.~Ore.
\newblock Theory of non-commutative polynomials.
\newblock {\em Ann. of Math. (2)}, 34(3):480--508, 1933.

\bibitem{petkovsek-wilf-zeilberger}
M.~Petkov\v{s}ek, H.~S. Wilf, and D.~Zeilberger.
\newblock {\em {$A=B$}}.
\newblock A K Peters, Ltd., Wellesley, MA, 1996.
\newblock With a foreword by Donald E. Knuth, With a separately available
  computer disk.

\bibitem{singer}
M.~F. Singer.
\newblock Introduction to the {G}alois theory of linear differential equations.
\newblock In {\em Algebraic theory of differential equations}, volume 357 of
  {\em London Math. Soc. Lecture Note Ser.}, pages 1--82. Cambridge Univ.
  Press, Cambridge, 2009.

\bibitem{vanderpoorten}
A.~van~der Poorten.
\newblock A proof that {E}uler missed{$\ldots $}{A}p\'{e}ry's proof of the
  irrationality of {$\zeta (3)$}.
\newblock {\em Math. Intelligencer}, 1(4):195--203, 1978/79.
\newblock An informal report.

\bibitem{vanderput-singer}
M.~van~der Put and M.~F. Singer.
\newblock {\em Galois theory of linear differential equations}, volume 328 of
  {\em Grundlehren der Mathematischen Wissenschaften [Fundamental Principles of
  Mathematical Sciences]}.
\newblock Springer-Verlag, Berlin, 2003.

\bibitem{vanhoeij1}
M.~van Hoeij.
\newblock Rational solutions of the mixed differential equation and its
  application to factorization of differential operators.
\newblock In E.~Engeler, B.~F. Caviness, and Y.~N. Lakshman, editors, {\em
  Proceedings of the 1996 International Symposium on Symbolic and Algebraic
  Computation, {ISSAC} '96, Zurich, Switzerland, July 24-26, 1996}, pages
  219--225. {ACM}, 1996.

\bibitem{vanhoeij2}
M.~van Hoeij.
\newblock Factorization of differential operators with rational functions
  coefficients.
\newblock {\em J. Symbolic Comput.}, 24(5):537--561, 1997.

\bibitem{weinstraub}
S.~H. Weintraub.
\newblock {\em Differential forms:Theory and practice}.
\newblock Elsevier/Academic Press, Amsterdam, second edition, 2014.
\newblock Theory and practice.

\bibitem{wilf-zeilberger1}
H.~S. Wilf and D.~Zeilberger.
\newblock Rational functions certify combinatorial identities.
\newblock {\em J. Amer. Math. Soc.}, 3(1):147--158, 1990.

\bibitem{wilf-zeilberger2}
H.~S. Wilf and D.~Zeilberger.
\newblock An algorithmic proof theory for hypergeometric (ordinary and
  ``{$q$}'') multisum/integral identities.
\newblock {\em Invent. Math.}, 108(3):575--633, 1992.

\bibitem{wilf-zeilberger3}
H.~S. Wilf and D.~Zeilberger.
\newblock Rational function certification of multisum/integral/``{$q$}''
  identities.
\newblock {\em Bull. Amer. Math. Soc. (N.S.)}, 27(1):148--153, 1992.

\bibitem{zeilberger}
D.~Zeilberger.
\newblock The method of creative telescoping.
\newblock {\em J. Symbolic Comput.}, 11(3):195--204, 1991.

\end{thebibliography}
\end{document}